\newcommand{\pa}{\partial}
\newcommand{\p}{\partial}
\newcommand{\nn}{\nonumber}
\newcommand{\e}{\epsilon}
\newcommand{\f}{\frac}
\newcommand{\la}{\lambda}
\newcommand{\bt}{{\bf t}}
\newcommand{\F}{\mathcal{F}}
\newcommand{\CC}{\mathbb{C}}
\newcommand{\ZZ}{\mathbb{Z}}
\newcommand{\beq}{\begin{equation}}
\newcommand{\eeq}{\end{equation}}
\newcommand{\half}{\frac{1}{2}}
\numberwithin{equation}{section}
\theoremstyle{plain}
\newtheorem{theorem}{Theorem}
\newtheorem{proposition}{Proposition}
\newtheorem{lemma}{Lemma}
\theoremstyle{definition}
\newtheorem{remark}{Remark}
\def\1{\mathbbm{1}}
\def\lan{\langle}
\def\ran{\rangle}
\begin{document}
\title[Virasoro-like algebra]{The Virasoro-like Algebra of a Frobenius Manifold}
\author[Liu, Yang, Zhang, Zhou]{Si-Qi Liu, Di Yang, Youjin Zhang, Jian Zhou}
\date{}
\keywords{Frobenius manifold, Virasoro algebra, 
Virasoro-like algebra, Virasoro-like constraints, Hodge integrals}
\subjclass[2020]{Primary: 53D45; Secondary: 14N35, 17B68, 37K10.}
\maketitle

\begin{abstract}
For an arbitrary calibrated Frobenius manifold,
we construct an infinite dimensional Lie algebra, called the {\it Virasoro-like algebra}, which
is a deformation of the Virasoro algebra of the Frobenius manifold.
By using the Virasoro-like algebra we give a family of quadratic PDEs
that are satisfied by the genus-zero free energy of the Frobenius manifold.
We also derive, under the semisimplicity assumption,
the Virasoro constraints for the corresponding abstract Hodge partition function.
\end{abstract}

\setcounter{tocdepth}{1}
\tableofcontents

\section{Introduction}
Let 
$M$ be an $l$-dimensional Frobenius manifold
of charge~$d$ \cite{Du1,Du2}. 
Denote by $\eta$ the Gram matrix in a certain flat coordinate system of the invariant flat metric 
of the Frobenius manifold, and by $(\mu,R)$ the spectrum \cite{Du1,Du2,Du3,DZ-norm}. 
For the details about the constant matrices $\eta,\mu,R$ 
see Section~\ref{section2}.

In~\cite{DZ-norm} B.~Dubrovin and the third-named author of the present paper introduced 
the {\it regularized stress tensor} $T(\lambda;\nu)$ to construct the {\it Virasoro algebra}  
of~$M$ \cite{DZ2, EHX, Ge, LT}. Let us 
recall this construction. 
Denote by $a_{\alpha,p}$, $p\in \mathbb{Z}+1/2$, the linear operators given by
\beq\label{creation_annihilation}
a_{\alpha,p} = \left\{\begin{array} {cl}
\epsilon \f{\pa}{\pa
t^{\alpha,p-1/2}}, & p>0, \\
\\
\epsilon^{-1} (-1)^{p+1/2} \, \eta_{\alpha\beta} \,
\widetilde t^{\beta,-p-1/2}, &  p<0, \end{array}\right.
\eeq
where $t^{\alpha,k}$, $k\geq0$, are indeterminates (time variables of the Frobenius manifold), 
and $\widetilde t^{\alpha,k}=t^{\alpha,k}-\delta^{\alpha,1}\delta^{k,1}$.
Here and in what follows, free Greek indices
take the integer values from~1 to~$l$, and the
Einstein summation convention
is applied to repeated Greek indices with one-up and one-down.
The operators $a_{\alpha,p}$ for $p>0$ are called {\it annihilation operators} and for $p<0$ are called the {\it creation operators}. They satisfy the commutation relations
\[
[a_{\alpha,p},a_{\beta,q}] = (-1)^{p-1/2} \eta_{\alpha\beta} \delta_{p+q,0},\quad \forall \, p,q\in \ZZ+\frac12.
\]
Denote
\begin{align}
& f_\alpha(\la;\nu) := \int_0^\infty \f{dz}{z^{1-\nu}} e^{-\la z}
\sum_{p\in\mathbb{Z}+\f12} a_{\beta,p} \left(z^{p+\mu} z^R\right)^\beta_\alpha, \label{natural_with_nu}\\
&G^{\alpha\beta}(\nu) = -\f{1}{2\pi} \left[\left(e^{\pi i R}e^{\pi i (\mu+\nu)}+e^{-\pi i R}e^{-\pi i (\mu+\nu)}\right)\eta^{-1}\right]^{\alpha\beta}.\label{Galphabetanu0411}
\end{align}
The regularized stress tensor $T(\la;\nu)$ is then defined in~\cite{DZ-norm} as follows:
\begin{equation}
T(\la;\nu) = -\f12 : \pa_\la(f_\alpha(\la;\nu)) G^{\alpha\beta}(\nu) \pa_\la(f_\beta(\la;-\nu)):
+ \f{1}{4\la^2} {\rm tr} \left(\f{1}{4}-\mu^2\right). \label{Virasoro_field_with_nu}
\end{equation}
Here, ``$:~:$" denotes the {\it normal
ordering} (putting the annihilation operators always on the right of the creation operators).
The regularized stress tensor can be represented in the form~\cite{DZ-norm}
\begin{align}\label{TnuLmnu427}
& T(\la;\nu) = \sum_{m\in \ZZ} \frac{L_m(\nu)}{\lambda^{m+2}}.
\end{align}
Then, as it is shown in~\cite{DZ-norm}, the operators $L_m$, $m\ge-1$, defined by 
\beq\label{viralm0922}
L_m:=\lim_{\nu\to 0} L_m(\nu),
\eeq yield the {\it Virasoro operators} \cite{DZ2, EHX, Ge, LT}
of the Frobenius manifold~$M$. These operators 
satisfy the following Virasoro commutation relations:
\beq\label{Lmn0407}
\left[L_m,L_n\right] = (m-n) L_{m+n}, \quad \forall\,m,n\geq -1.
\eeq
The Virasoro algebra of~$M$ is an infinite-dimensional Lie algebra defined as 
$\bigoplus_{m=-1}^\infty \CC L_{m}$ with 
the commutator as the Lie bracket, denoted by~${\rm Vira}$. It  
 is used in~\cite{DZ-norm} to the reconstruction of higher genus free energies~\cite{DZ-norm} 
of~$M$ for the case when 
$M$ is semisimple. 

Let us proceed with the construction of the {\it Virasoro-like algebra}. Note that 
the $\nu$-dependence in~$L_m(\nu)$ for every $m\geq-1$~is {\it a priori} a series, but, we will
 show in Section~\ref{section2} that this series is actually {\it a polynomial}. 
 More interestingly, each polynomial~$L_m(\nu)$, $m\ge-1$, is
 {\it even} in~$\nu$ with degree $2[(m+1)/2]$, i.e.,
\beq\label{evenpoly}
L_m(\nu) =: \sum_{k=0}^{[(m+1)/2]} L_{m,2k} \nu^{2k}, \quad m\geq -1.
\eeq
The proof will again be given in Section~\ref{section2}. We call the operators with $L_{m,2k}$, $m\ge-1$,
$0\le k\le [(m+1)/2]$, the {\it Virasoro-like operators} of~$M$. In particular, $L_{m,0}=L_m=L_m(0)$, $m\geq-1$, 
are the Virasoro operators~\eqref{viralm0922}. 
In general, we have the following lemma. 
\begin{lemma} \label{linearindep923}
The operators ${\rm id}$, $L_{m,2k}$, $m\geq -1, 0\leq k\leq [(m+1)/2]$, are linearly independent. 
\end{lemma}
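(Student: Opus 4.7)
The strategy I would pursue is to combine the $\mathrm{ad}(L_0)$-weight decomposition with an explicit analysis of the coefficients of normal-ordered quadratics appearing in $L_m(\nu)$.

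First, since $T(\lambda;\nu)$ is constructed in~\eqref{Virasoro_field_with_nu} as a weight-$2$ field with respect to $L_0=L_{0,0}$, a direct mode computation (using the induced action $[L_0,a_{\alpha,p}]=-(p+\mu_\alpha)\,a_{\alpha,p}$ on the oscillators) yields
\[
[L_0,\,L_m(\nu)] \;=\; -m\,L_m(\nu), \qquad m\ge -1,
\]
and therefore $[L_0,L_{m,2k}]=-m\,L_{m,2k}$ for every admissible $(m,k)$. Consequently, the $L_{m,2k}$ with distinct $m$ occupy distinct eigenspaces of $\mathrm{ad}(L_0)$, while $\mathrm{id}$ lies in the weight-$0$ eigenspace alongside $L_{0,0}$. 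Any hypothetical linear relation thus decouples into relations one for each $m$-weight; the $m=0$ relation involves only $\mathrm{id}$ and $L_0$, which are trivially linearly independent since $L_0$ carries nontrivial derivative terms while $\mathrm{id}$ does not.

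For each fixed $m\ge 1$, the plan is to examine the coefficients of specific normal-ordered bilinears $:a_{\alpha,p}a_{\beta,q}:$ inside $L_m(\nu)$. Writing $\pa_\la f_\alpha(\la;\pm\nu)$ via~\eqref{natural_with_nu}, each such coefficient combines (after contracting with $G^{\alpha\beta}(\nu)$ and extracting $\la^{-m-2}$, which forces the mass-shell condition $p+q+\mu_\alpha+\mu_\beta=m$) into a product of Gamma-function factors $\Gamma(\pm\nu+p+\mu_\alpha+1)$. The reflection identity $\Gamma(\nu+s)\Gamma(-\nu+s)=\pi/\cos(\pi(s-\tfrac{1}{2}))$ then cancels against the cosine factor in $G^{\alpha\beta}(\nu)$, producing an explicit polynomial in $\nu^2$ of degree at most $N:=[(m+1)/2]$. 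Choosing $N+1$ admissible quadruples $(\alpha_j,p_j,\beta_j,q_j)$ produces $N+1$ polynomials in $\CC[\nu^2]_{\le N}$; verifying that these span the whole $(N+1)$-dimensional space—for instance, by a generalized Vandermonde determinant in the shifted parameters $p_j+\mu_{\alpha_j}+\tfrac{1}{2}$—yields linear independence of $L_{m,0},\dots,L_{m,2N}$.

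The main obstacle will be the final spanning verification. Because the admissible quadruples are constrained by the mass-shell condition, for Frobenius manifolds with coincidences among the spectral parameters the shifts $p_j+\mu_{\alpha_j}+\tfrac{1}{2}$ may collide and spoil the naive Vandermonde argument. Resolving this requires a careful enumeration of admissible quadruples—exploiting the fact that the half-integer index set $\ZZ+\tfrac{1}{2}$ always provides $N+1$ distinct shifts—or, in truly degenerate cases, harnessing the logarithmic contributions from the nilpotent part $R$ of the spectrum as additional distinguishing parameters.
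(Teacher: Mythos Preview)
Your plan is essentially sound, but it differs from, and is more elaborate than, the paper's route. The paper's proof is a one-liner: by formula~\eqref{L_m+n_R421} and Lemma~\ref{coefficients_R}, the coefficient of $\nu^{2k}$ multiplying $:a_p\,(\,\cdot\,)\,a^{m-p}:$ (the $t=0$ piece) is, up to a nonzero constant, the polynomial $a_{m,2k}(\mu_\alpha+p)=\sum_{b\ge 2k}\frac{(-1)^{b}}{b!}\left[\begin{smallmatrix}b\\2k\end{smallmatrix}\right]\binom{\mu_\alpha+p}{m+1-b}$, whose degree in $\mu_\alpha+p$ is exactly $m+1-2k$. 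Since these degrees are distinct for $k=0,1,\dots,[(m+1)/2]$, the polynomials are linearly independent; no Vandermonde or careful choice of quadruples is needed. The separation of different $m$-values is automatic once one tracks the $\mu$-grading: the ``$\eta$-diagonal'' coefficient of $:a_{\alpha,p}a_{\beta,m-p}:$ with $\eta^{\alpha\beta}\ne0$ receives contributions only from $t=0$ (because $((R^j)_t)^\alpha_\gamma\eta^{\gamma\beta}\ne0$ with $t>0$ forces $\mu_\alpha-\mu_\gamma=t$ and $\mu_\gamma=-\mu_\beta=\mu_\alpha$, a contradiction), so an $L_{m',2k'}$ with $m'\ne m$ cannot touch it.

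Your $\mathrm{ad}(L_0)$-weight decomposition is a perfectly valid alternative for separating the $m$-values, and conceptually cleaner than the grading argument above; it is simply not what the paper does. Two small corrections: the reflection identity you quote, $\Gamma(\nu+s)\Gamma(-\nu+s)=\pi/\cos\pi(s-\tfrac12)$, is false (set $\nu=0$); the actual simplification uses $\Gamma(z)\Gamma(1-z)=\pi/\sin\pi z$ applied to $\Gamma(\mu+\nu+p+r+1)\Gamma(-\mu-\nu-p-r)$, with the half-integrality of $p+r$ converting the sine to $\pm\cos\pi(\mu+\nu)$ and cancelling the factor in $G^{\alpha\beta}(\nu)$. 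Also, your worry about colliding shifts $p_j+\mu_{\alpha_j}+\tfrac12$ is a non-issue: fix $\alpha$ and vary $p$ over $\mathbb{Z}+\tfrac12$ to get infinitely many distinct evaluation points---or simply observe, as above, that the relevant polynomials have pairwise distinct degrees, which makes any determinantal check unnecessary and renders the nilpotent part $R$ irrelevant to the argument.
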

The proof of this lemma is given in Section~\ref{section3}. Let us then denote
\begin{align}
& {\rm Vira}_{\rm like}
:=\CC {\rm id} \bigoplus {\rm Span}_{\CC}  \left\{L_{m,2k} \mid m\geq -1, 0\leq k\leq [(m+1)/2]\right\}. \label{viralikedef0902}
\end{align}

\begin{theorem}\label{viralikealgebra}
${\rm Vira}_{\rm like}$ is a Lie algebra with the Lie bracket given by the commutator.
More precisely, there exist constants $c_{m,2k,n,2\ell,2h}$ $(m,n\geq -1$, $0\leq k \leq[(m+1)/2]$, $0\leq \ell \leq[(n+1)/2]$, 
$0\leq h\leq [(m+n+1)/2])$,
such that
\begin{align}
&\left[L_{m,2k},L_{n,2\ell}\right] = \sum_{h = 0}^{[(m+n+1)/2]}  c_{m,2k,n,2\ell,2h}  L_{m+n,2h} \nn\\
&-\delta_{m,1}\delta_{n,-1} \delta_{k,1} \delta_{\ell,0} \frac{l}2 + \delta_{m,-1}\delta_{n,1} \delta_{k,0} \delta_{\ell,1} \frac{l}2 . 
\label{structureviralike422-424}
\end{align}
Here, $m,n\geq -1$, $0\leq k\leq [(m+1)/2]$, $0\leq \ell\leq [(n+1)/2]$. 
Moreover, these constants $c_{m,2k,n,2\ell,2h}$ vanish whenever $h<k+\ell$. 
Furthermore, the constants
$c_{m,2k,n,2\ell,2h}$ are independent
of the Frobenius manifold. 
\end{theorem}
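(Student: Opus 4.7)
My plan is to read Theorem~\ref{viralikealgebra} off from the single identity obtained by computing the commutator of two regularized stress tensors
\[
\bigl[T(\lambda;\nu),\,T(\mu;\sigma)\bigr]
\]
as a formal two-variable Laurent series in $\lambda,\mu$ with operator-valued coefficients. Since $T(\lambda;\nu)$ is a normal-ordered product of two linear combinations of the bosonic modes $a_{\alpha,p}$, Wick's theorem splits the commutator into a \emph{single-contraction} piece, which is again a normal-ordered bilinear in the modes, and a \emph{double-contraction} piece, which is a $c$-number. Both are evaluated from the universal CCR $[a_{\alpha,p},a_{\beta,q}]=(-1)^{p-1/2}\eta_{\alpha\beta}\delta_{p+q,0}$ together with the explicit expressions~\eqref{natural_with_nu}--\eqref{Galphabetanu0411}, so every contraction reduces to explicit manipulations of Mellin kernels and of the matrix factor $(z^{p+\mu}z^R)^\beta_\alpha$.

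The key structural step is to show that the single-contraction piece rearranges into a sum of $T(\lambda;\cdot)$'s and $T(\mu;\cdot)$'s with scalar coefficients that are rational in $\lambda,\mu$ and polynomial in $\nu,\sigma$. Once such a closed OPE-type formula is in hand, extracting the coefficient of $\lambda^{-m-2}\mu^{-n-2}$ yields
\[
[L_m(\nu),L_n(\sigma)] \;=\; \sum_{h\geq 0} p_{m,n,h}(\nu,\sigma)\,L_{m+n,2h}\;+\;\kappa_{m,n}(\nu,\sigma),
\]
where, by bilinearity and the evenness of each $L_r(\cdot)$, the polynomials $p_{m,n,h}$ are separately even in $\nu$ and in $\sigma$, and $\kappa_{m,n}\in\CC[\nu,\sigma]$. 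Setting $c_{m,2k,n,2\ell,2h}$ equal to the coefficient of $\nu^{2k}\sigma^{2\ell}$ in $p_{m,n,h}$ then produces exactly the bracket relation~\eqref{structureviralike422-424}; uniqueness of these numbers is guaranteed by Lemma~\ref{linearindep923}.

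Three further properties then need to be verified. \emph{Universality}: the $c_{m,2k,n,2\ell,2h}$ carry no $\eta,\mu,R$-dependence because these matrices enter the contraction symmetrically on the two sides and get absorbed into the $L_{m+n,2h}$'s themselves, leaving scalar coefficients that depend only on the Laurent exponents and on the Mellin shifts in $\nu$ and $\sigma$. \emph{Triangularity}: $c_{m,2k,n,2\ell,2h}=0$ for $h<k+\ell$ is forced by the fact that, in the reorganization of the single contraction, every power of $\nu$ or $\sigma$ is paired with a shift sitting inside the argument of an $L_{m+n}$-like field, so that $\nu^{2k}\sigma^{2\ell}$ can only multiply $L_{m+n,2h}$ with $h\ge k+\ell$. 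The \emph{central $c$-number term} is pinned down by the double contraction: it vanishes unless $m+n=0$; the specific coefficient $l/2$ arises from $\eta^{\alpha\beta}\eta_{\alpha\beta}=l$, and inspection of the powers of $\lambda,\mu,\nu,\sigma$ singles out the two modes listed in \eqref{structureviralike422-424} as the only ones on which this contribution is non-trivial and is not already absorbed into the standard Virasoro central term in \eqref{Lmn0407}.

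The main obstacle is the combinatorial bookkeeping of the Wick contractions, combined with the algebraic rearrangement needed to display the single-contraction piece in the closed OPE form above. Keeping precise track of the signs $(-1)^{p-1/2}$, the $\Gamma$-factors from the Mellin integrals, the matrix-valued $z^{\mu}z^R$ exponents, and the $G^{\alpha\beta}(\pm\nu), G^{\alpha\beta}(\pm\sigma)$ factors is the central technical burden; in particular, verifying that the anomalous $c$-numbers coming from the double contraction assemble into exactly the combination prescribed by~\eqref{structureviralike422-424} — and not some richer polynomial in $\nu,\sigma$ — is where most of the computational effort will sit. Once the OPE formula for $[T(\lambda;\nu),T(\mu;\sigma)]$ is established, reading off the structure constants and verifying the three properties above is a straightforward coefficient extraction.
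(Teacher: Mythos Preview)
Your Wick-contraction computation of $[L_m(\nu),L_n(\sigma)]$ is essentially what the paper carries out in deriving equation~\eqref{comm_nu_nu'}, so up to that point you are on the same track. The genuine gap is what you call ``the key structural step'': showing that the single-contraction piece, which is a normal-ordered bilinear with matrix coefficients of the form $M^p_q(r,\nu)\,M^{p'}_{q'}(s,\sigma)$, can be rewritten as a $(\nu,\sigma)$-polynomial combination of the $L_{m+n,2h}$'s with \emph{scalar} coefficients independent of $(\eta,\mu,R)$. You describe this as combinatorial bookkeeping, but it is the entire mathematical content of the theorem, and nothing in your outline explains why such a rearrangement should exist. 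The matrices $M^p_q(r,\nu)$ depend on $\mu$ and on~$R$; there is no a~priori reason their products should factor as (universal scalar)$\times$(a single $M$-type expression), and your symmetry argument for universality (``these matrices enter the contraction symmetrically and get absorbed'') is not a proof. Likewise, your triangularity heuristic (``every power of $\nu$ or $\sigma$ is paired with a shift'') presupposes exactly the closed OPE form whose existence is in question.

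The paper does \emph{not} attempt this rearrangement directly for general $m,n$. Instead it (i)~proves, by explicit and rather intricate generating-function identities (Proposition~\ref{univeral_comm_n=1,l=0,2}, identities~\eqref{id1}--\eqref{generating_id2_R}), that closure holds with universal coefficients in the special cases $[L_{m,2k},L_{1,0}]$ and $[L_{m,2k},L_{1,2}]$, treating $R=0$ and general~$R$ separately; and then (ii)~runs an induction via the Jacobi identity, using that $L_{1,0}$, $L_{1,2}$ together with the Virasoro operators $L_{m,0}$ generate all $L_{m,2k}$ under brackets, to propagate closure, universality, and the triangularity $h\ge k+\ell$ to arbitrary $(m,n,k,\ell)$. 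If you wish to push through a direct OPE argument, you would need a closed-form product identity for the matrices $M^p_q(r,\nu)$ valid for all spectral data $(\mu,R)$ --- something neither stated in your proposal nor evidently available.
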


The proof of the above theorem will be given in Section~\ref{section3}.
We call ${\rm Vira}_{\rm like}$ the {\it Virasoro-like algebra} of~$M$,
and call $c_{m,2k,n,2\ell,2h}$ ($m,n\geq -1$, $0\leq k \leq[(m+1)/2]$, $0\leq \ell \leq[(n+1)/2]$, $k+\ell\leq h\leq [(m+n+1)/2]$)
the {\it essential structure constants} of the Virasoro-like algebra~${\rm Vira}_{\rm like}$.
An elementary description for these structure constants is given by Proposition~\ref{corollary10425} of Section~\ref{section3}.

The operator $L_{1,2}$ appeared in~\cite{EHX} and~\cite{LT} 
in the study of constraints for the genus zero free 
energy (see Sections~\ref{section2} and~\ref{section4}), and the $L_{2,2}$-operator appeared in~\cite{LT}. 
It was observed by B.~Dubrovin that the Virasoro-like algebra 
${\rm Vira}_{\rm like}$ could be 
generated by the operators $L_{1,2}$ and $L_{m,0}$, $m=-1,0,1,2$. 

It turns out that the Virasoro-like algebra contains at least {\it three} interesting Lie subalgebras. 
Let us describe these three. 
The first one is ${\rm Vira}$, i.e. the Virasoro algebra of~$M$, which is 
spanned by $L_{m,0}$, $m\geq-1$, as already given above in~\eqref{Lmn0407}.
The second Lie subalgebra of ${\rm Vira}_{\rm like}$, is spanned by 
${\rm id}$ and $L_m(1/2)$, $m\geq-1$, denoted by ${\rm Vira}_{1/2}$.
\begin{proposition}\label{virahalftwistthm}
The operators~$L_m(1/2)$ satisfy the following commutation relations:
\beq\label{commlmn12}
\left[L_m(1/2),L_n(1/2)\right] = (m-n) L_{m+n}(1/2) - \delta_{m,1} \delta_{n,-1} \frac{l}{8} +
 \delta_{m,-1} \delta_{n,1} \frac{l}{8}, \quad m,n\geq -1.
\eeq
\end{proposition}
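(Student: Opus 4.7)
My plan is to derive Proposition~\ref{virahalftwistthm} directly from Theorem~\ref{viralikealgebra} by exploiting the polynomial structure~\eqref{evenpoly} of $L_m(\nu)$ at the particular value $\nu=1/2$. Writing
$$L_m(1/2) \;=\; \sum_{k=0}^{[(m+1)/2]} \frac{L_{m,2k}}{4^{k}},\qquad m\geq -1,$$
bilinearity of the bracket gives
$$[L_m(1/2),L_n(1/2)] \;=\; \sum_{k,\ell}\frac{[L_{m,2k},L_{n,2\ell}]}{4^{k+\ell}},$$
into which I would substitute~\eqref{structureviralike422-424} and then treat the scalar piece and the structural piece separately.

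The scalar piece is immediate. Among all $(k,\ell)$ only $(k,\ell)=(1,0)$ (with $(m,n)=(1,-1)$) and $(k,\ell)=(0,1)$ (with $(m,n)=(-1,1)$) produce nonzero central contributions under Theorem~\ref{viralikealgebra}. They give
$$-\frac{l/2}{4}\,\delta_{m,1}\delta_{n,-1} \;+\; \frac{l/2}{4}\,\delta_{m,-1}\delta_{n,1} \;=\; -\frac{l}{8}\delta_{m,1}\delta_{n,-1} + \frac{l}{8}\delta_{m,-1}\delta_{n,1},$$
exactly matching the boundary terms on the right-hand side of~\eqref{commlmn12}.

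For the structural piece, the required conclusion reduces (via Lemma~\ref{linearindep923}, applied to the basis $\{L_{m+n,2h}\}_h$) to the numerical identities
$$\sum_{k+\ell\leq h} \frac{c_{m,2k,n,2\ell,2h}}{4^{k+\ell}} \;=\; \frac{m-n}{4^h},\qquad 0\leq h\leq [(m+n+1)/2].$$
At $h=0$ only $(k,\ell)=(0,0)$ contributes, so the identity becomes $c_{m,0,n,0,0}=m-n$, which is forced by~\eqref{Lmn0407}. For $h\geq 1$ I would insert the closed-form description of the essential structure constants $c_{m,2k,n,2\ell,2h}$ supplied by Proposition~\ref{corollary10425} and carry out the resulting finite sum. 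I expect the main obstacle of the proof to sit precisely here: one must check that, when the explicit combinatorial expressions for the $c$'s are weighted by $4^{-(k+\ell)}$ and summed, the many terms collapse to the single quantity $(m-n)/4^h$.

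If that collapse turns out to be unwieldy, an alternative route avoiding Proposition~\ref{corollary10425} is to compute the commutator $[T(\lambda;1/2),T(\mu;1/2)]$ directly from~\eqref{Virasoro_field_with_nu} by Wick contractions of the oscillators $a_{\alpha,p}$. The analogue of the Dubrovin--Zhang calculation at $\nu=0$ goes through, the only new ingredient being the behaviour of $G^{\alpha\beta}(\pm 1/2)$ under $e^{\pi i(\mu+1/2)}=ie^{\pi i\mu}$; this controls the anomalous double-contraction term and supplies the $\pm l/8$ central piece, after which extracting the Laurent coefficients via~\eqref{TnuLmnu427} yields~\eqref{commlmn12}.
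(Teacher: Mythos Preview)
Your strategy is correct and, once unpacked, is essentially the paper's own argument in disguise. The paper observes that by Theorem~\ref{viralikealgebra} the structure constants are universal, so one may work in the one-dimensional Frobenius manifold ($\mu=0$, $R=0$); there the claim reduces to the single elementary identity
\[
\Bigl(\tbinom{\frac12+x}{m+1}+\tbinom{-\frac12+x}{m+1}\Bigr)\Bigl(\tbinom{\frac12+x-m}{n+1}+\tbinom{-\frac12+x-m}{n+1}\Bigr)-(\text{swap }m\leftrightarrow n)
= -\,\frac{2(m-n)(m+n+1)!}{(m+1)!\,(n+1)!}\Bigl(\tbinom{\frac12+x}{m+n+1}+\tbinom{-\frac12+x}{m+n+1}\Bigr),
\]
which is checked directly. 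Your numerical target $\sum_{k+\ell\le h} c_{m,2k,n,2\ell,2h}\,4^{-(k+\ell)}=(m-n)4^{-h}$ is exactly this identity after applying Lemma~\ref{coefficients} and comparing coefficients of $\binom{x}{m+n+1-b}$.

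One point to correct in your plan: Proposition~\ref{corollary10425} does \emph{not} furnish closed formulas for the general $c_{m,2k,n,2\ell,2h}$; the explicit formulas there cover only the special families \eqref{cm0n0h424}--\eqref{c2km12kn2l2h424}. What it does supply is the generating relation~\eqref{comm_nu_nu'_explicitoncorele424}, and specializing that relation to $\nu=\tilde\nu=\tfrac12$ \emph{is} the binomial identity above. So rather than ``inserting closed forms and carrying out the finite sum'', you should simply evaluate~\eqref{comm_nu_nu'_explicitoncorele424} at $\nu=\tilde\nu=\tfrac12$; the anticipated ``collapse'' then becomes a direct polynomial verification, which is what the paper does. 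Your alternative Wick-contraction route would also work but is unnecessary.
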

The reason that we can take $\nu=1/2$ in~$L_m(\nu)$, $m\geq-1$, is because
$L_m(\nu)$ is a polynomial of~$\nu$.
The proof of Proposition~\ref{virahalftwistthm} is given in Section~\ref{section3}.
 \begin{remark}
We note that the Virasoro-like algebra ${\rm Vira}_{\rm like}$ is an infinite dimensional
deformation of ${\rm Vira}$ as well as of ${\rm Vira}_{1/2}$.
It is a reminiscence of the spectral flow of superconformal algebras \cite{CK,SS}.
In view of~\eqref{Lmn0407} and~\eqref{commlmn12}, it will be interesting
to investigate if the above two Lie subalgebras of ${\rm Vira}_{\rm like}$
could be related to the constructions in \cite{CK,SS} under a certain Bose--Fermi correspondence (cf.~also~\cite{Kac}).
\end{remark}
The third interesting Lie subalgebra of the Virasoro-like algebra ${\rm Vira}_{\rm like}$, is a {\it commutative} one, 
spanned by the operators $L_{2k-1,2k}$, $k\geq1$. 
These operators appeared in the work of Faber and Pandharipande~\cite{FP} in the study of Hodge integrals, and
 have the explicit expressions
\begin{equation}\label{L2k-1,2k}
L_{2k-1,2k} = \sum_{m\geq 0} \tilde{t}^{\alpha,m} \frac{\p}{\p t^{\alpha,m+2k-1}}
- \frac{\epsilon^2}{2} \sum_{m=0}^{2k-2} (-1)^m \eta^{\alpha\gamma} \frac{\p^2}{\p t^{\alpha,m}\p t^{\gamma,2k-2-m}}.
\end{equation}

Let us proceed to present several applications of the Virasoro-like algebra.
Fix a calibration $\theta_{\alpha,p}$, $p\geq0$, of the Frobenius manifold~$M$, 
and denote by~$\F_0(\bt)$ the genus zero free energy~\cite{Du1,DZ-norm} 
of the calibrated Frobenius manifold (see Section~\ref{section2}). 
Here $\bt:=(t^{\alpha,p})_{1\leq \alpha\leq l, p\geq 0}$.
The first application of the Virasoro-like algebra is given by the following theorem. 
\begin{theorem} \label{theoremcor415}
The following relations hold true as $\e\rightarrow 0$:
\beq\label{viralikeconstraints}
e^{-\e^{-2}\F_0(\bt)} L_{m,2k} e^{\e^{-2}\F_0(\bt)} = {\rm O}(1), \quad m\geq -1,\,0\leq k \leq [(m+1)/2] .
\eeq
\end{theorem}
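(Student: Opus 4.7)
The plan is to prove the estimate at the level of the regularized stress tensor $T(\la;\nu)$ for the whole polynomial family $L_m(\nu)$, and then extract the statement for each $L_{m,2k}$ by reading off coefficients of the $\nu$-expansion~\eqref{evenpoly}.

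First, I would conjugate the building blocks $f_\alpha(\la;\nu)$ in~\eqref{natural_with_nu} by $e^{\e^{-2}\F_0}$. Since $a_{\alpha,p}$ with $p<0$ acts by multiplication (hence commutes with $\F_0$), while for $p>0$
\[
e^{-\e^{-2}\F_0}\,a_{\alpha,p}\,e^{\e^{-2}\F_0}=a_{\alpha,p}+\e^{-1}\frac{\pa\F_0}{\pa t^{\alpha,p-1/2}},
\]
this shifts $f_\alpha(\la;\nu)\mapsto f_\alpha(\la;\nu)+\e^{-1}\phi_\alpha(\la;\nu;\bt)$, where $\phi_\alpha$ is obtained from $f_\alpha$ by replacing each annihilation operator by the corresponding first derivative of $\F_0$. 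Expanding~\eqref{Virasoro_field_with_nu} (and noting that the scalar $\phi_\alpha$ drops out of the normal ordering) gives
\[
\widehat T(\la;\nu):=e^{-\e^{-2}\F_0}T(\la;\nu)e^{\e^{-2}\F_0}=T(\la;\nu)+\e^{-1}\Xi(\la;\nu)+\e^{-2}\Om(\la;\nu),
\]
with $\Xi(\la;\nu)$ linear in the $a_{\alpha,p}$ and $\Om(\la;\nu)=-\half\,\pa_\la\phi_\alpha(\la;\nu)\,G^{\alpha\beta}(\nu)\,\pa_\la\phi_\beta(\la;-\nu)$ a pure scalar. By the $\e$-grading of~\eqref{creation_annihilation}, both $T(\la;\nu)$ and $\e^{-1}\Xi(\la;\nu)$ contribute only to $\{\e^{-2},\e^{0},\e^{2}\}$, so $\widehat T(\la;\nu)=\mathrm O(1)$ is equivalent to the vanishing, as a function of $\bt$, of the total $\e^{-2}$-coefficient of $\widehat T(\la;\nu)$.

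Second, I would verify this vanishing for every $\nu$. The specialisation $\nu=0$ is precisely the classical genus-zero Virasoro constraint $e^{-\e^{-2}\F_0}L_m e^{\e^{-2}\F_0}=\mathrm O(1)$ of~\cite{DZ-norm}. To cover general $\nu$ I would write the $\e^{-2}$-coefficient as a Laplace integral in the $z$-variable of~\eqref{natural_with_nu}: the $\nu$-dependence enters only through the factor $z^{\nu}$ and through $G^{\alpha\beta}(\nu)$ of~\eqref{Galphabetanu0411}, and in both places $\nu$ couples the same pair of genus-zero two-point building blocks of the principal hierarchy of~$M$. These blocks satisfy a Hirota-type bilinear identity (equivalent to the combination of tau-symmetry with quasi-homogeneity, which at the level of the stress tensor is exactly what makes $\Om(\la;0)$ vanish). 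The $\nu$-deformed integrand is of the same shape as at $\nu=0$, only rescaled by $\nu$-dependent trigonometric factors coming from $G^{\alpha\beta}(\pm\nu)$; its vanishing therefore propagates from the $\nu=0$ case.

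Finally, from $\widehat T(\la;\nu)=\mathrm O(1)$ and the $\la$-expansion~\eqref{TnuLmnu427} (preserved by the $\la$-independent conjugation), one reads off $e^{-\e^{-2}\F_0}L_m(\nu)e^{\e^{-2}\F_0}=\mathrm O(1)$ for every $m\geq-1$ and every $\nu$. Because $L_m(\nu)$ is a polynomial in $\nu$ by~\eqref{evenpoly} and the conjugation does not involve $\nu$, each coefficient $L_{m,2k}$ individually satisfies $e^{-\e^{-2}\F_0}L_{m,2k}e^{\e^{-2}\F_0}=\mathrm O(1)$, which is~\eqref{viralikeconstraints}. The main obstacle is the second step: recognising the $\nu$-deformation of the classical quadratic genus-zero identity as a $\nu$-dependent multiple of the original, rather than a genuinely new PDE on $\F_0$. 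An alternative route avoiding this direct analysis is to check the estimate only for the generating set $\{L_{1,2},L_{m,0}:m=-1,0,1,2\}$ mentioned after Theorem~\ref{viralikealgebra} and then propagate it through commutators, using the crucial fact that Virasoro-like brackets only raise the index~$k$; this reduces the analytic input to a single computation for $L_{1,2}$.
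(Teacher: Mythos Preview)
Your main line of attack has a genuine gap. You correctly reduce the question to the vanishing of the $\e^{-2}$-coefficient of $\widehat T(\la;\nu)$, and you correctly note that $\nu$ enters $f_\alpha(\la;\nu)$ only through the extra factor $z^{\nu}$ in the Laplace integral and through $G^{\alpha\beta}(\nu)$. But the conclusion that the $\nu$-deformed quadratic identity is ``a $\nu$-dependent multiple of the original'' is not right. The scalar $\Om(\la;\nu)$ is a double Laplace integral in two variables $z,w$, and the $\nu$-dependence sits in the factor $(z/w)^{\nu}\,G^{\alpha\beta}(\nu)$. Because $(z/w)^{\nu}$ depends on the integration variables, different values of $\nu$ weight the bilinear combinations of the $\Omega_{\alpha,i;\beta,j}$ (equivalently, of the $\pa\F_0/\pa t^{\alpha,p}$) with genuinely different coefficients; expanding in $\nu$ you get, for each $k\ge1$, a new quadratic PDE on $\F_0$ that is not a consequence of the $\nu=0$ one by any rescaling. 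This is in fact the content of the theorem: the $L_{m,2k}$-constraints for $k\ge1$ are a new family of genus-zero identities. So the step ``its vanishing therefore propagates from the $\nu=0$ case'' does not go through; you yourself flag it as the obstacle, and it is a real one.

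Your alternative route at the end is the correct strategy, and it is essentially what the paper does. The paper first proves the elementary lemma that if $\tau^{-1}A\tau=\mathrm O(1)$ and $\tau^{-1}B\tau=\mathrm O(1)$ for operators of the relevant shape, then $\tau^{-1}[A,B]\tau=\mathrm O(1)$. It then verifies the single new input $L_{1,2}$ directly from the recursion relations for $\Omega_{\alpha,p;\beta,q}$ (this is the ``single computation for $L_{1,2}$'' you anticipate). From there the paper inducts along the diagonal, using $[L_{1,0},L_{2k-1,2k}]\propto L_{2k,2k}$ and $[L_{1,2},L_{2k,2k}]\propto L_{2k+1,2k+2}$ to get all $L_{2k-1,2k}$ and $L_{2k,2k}$, and finally uses that in $[L_{m,2k},L_{1,0}]$ the coefficient of $L_{m+1,2k}$, namely $m-1-\tfrac{2k(2k-3)}{m+2}$, is nonzero for all $m\ge1$ and $1\le k\le[(m+1)/2]$, so one can solve for $L_{m+1,2k}$ in terms of that bracket and operators $L_{m+1,2h}$ with $h>k$, and descend in $k$. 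Your remark that ``Virasoro-like brackets only raise the index $k$'' is exactly the structural fact that makes this descent terminate. So: drop the analytic $\nu$-propagation argument and carry out the commutator-generation argument you sketch at the end; that is both correct and what the paper does.
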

The proof, based on the results of~\cite{DZ2}, will be given in Section~\ref{section4}.

For the second application we will assume that $M$ is {\it semisimple}.
Denote by~$Z$ the partition function of this Frobenius manifold~\cite{DZ-norm} (cf.~also~\cite{Du4,Givental}). Namely,
\beq
Z = Z(\bt; \e) = \sum_{g\geq0} \e^{2g-2} \mathcal{F}_g(\bt)
\eeq
is the power series of ${\bf t}_{>0}$ that satisfies the conditions
\begin{align}
& L_{m,0} Z = 0, \quad \forall\, m\geq -1,\nn\\
& \sum_{p\ge0} \tilde{t}^{\alpha,p} \frac{\p Z}{\p t^{\alpha,p}} + \e \frac{\p Z}{\p \e} + \frac{l}{24} Z = 0 , \nn\\
& \mathcal{F}_g (\bt) = F_g \left(v_{\rm top}(\bt), \p_x(v_{\rm top}(\bt)),\dots, \p_x^{3g-2} (v_{\rm top}(\bt))\right), \quad g\geq 1,\nn
\end{align}
where $v_{\rm top}(\bt)$ denotes the {\it topological solution} to the Principal Hierarchy of~$M$ (cf.~Section~\ref{section2}), $F_g=F_g(z_0,z_1,\dots,z_{3g-2})$, $g\geq1$, are polynomials of $z_2,z_3,\dots,z_{3g-2}$ 
and rational functions of~$z_1$ with
coefficients that are smooth functions of~$z_0$, and $z_k$ are vectors whose components are indeterminates. 
Here $\bt_{>0}:=(t^{\alpha,p})_{1\leq \alpha\leq l, p\geq 1}$.
The {\it abstract Hodge partition function}~$Z_{\rm H}$ associated with the Frobenius manifold~$M$ is defined 
as follows~\cite{FP,Givental} (cf.~also~\cite{DLYZ}):
\beq\label{hodgepartitionfunction420}
Z_{\rm H} = Z_{\rm H}(\bt; \boldsymbol{\sigma}; \e) := \exp\left(-\sum_{k\ge1} \frac{\sigma_{2k-1} B_{2k}}{(2k)!} L_{2k-1,2k} \right) \, Z.
\eeq
Here, $B_n$, $n\geq 0$, denotes the $n$th Bernoulli number.
In the case when $M$ comes from the quantum cohomology of some smooth projective variety~$X$, the partition function $Z$ coincides with the
partition function for the Gromov-Witten invariants of~$X$ \cite{DZ-norm, Givental, Teleman}, and the abstract Hodge
partition function $Z_{\rm H}$ 
coincides with the partition function of the Hodge integrals of~$X$ as it is shown~\cite{FP}.

\begin{theorem}\label{theorem30407}
The Hodge partition function satisfies the following Virasoro constraints:
\beq
L_n^{\rm H} Z_{\rm H}  = 0, \quad n\ge -1,
\eeq
with the operators $L_n^{\rm H}$ given in terms of the Virasoro-like operators by
\begin{align}
& L_{-1} ^{\rm H} =  L_{-1,0} + \frac{\sigma_1}{24} l , \label{lminus1421}\\
& L_n^{\rm H} = L_{n,0} + \sum_{m=1}^\infty \frac{(-1)^m}{m!} \sum_{k_1,\dots,k_m\geq 1}
\prod_{j=1}^m \frac{B_{2k_j} \sigma_{2k_j-1}}{\binom{n+1-j+2\sum_{i=1}^j k_i}{n+2-j+2\sum_{i=1}^{j-1} k_i}} \nn\\
& \times \sum_{h_1=k_1}^\infty \sum_{h_2=k_2+h_1}^\infty \dots \sum_{h_m=k_m+h_{m-1}}^\infty L_{n+(2k_1-1)+\cdots+(2k_m-1),2h_m} 
\prod_{j=1}^m \left\{\begin{array}{c} 2h_j-2h_{j-1} \\ 2k_j-1\end{array}\right\} \binom{2h_j}{2h_{j-1}},
\end{align}
where $\left\{\begin{array}{c} a \\ b\end{array}\right\}$ denotes the Stirling number of
the second kind $($cf.~\eqref{stirlinggen425}$)$, and 
$h_0$ is understood to be zero.
Moreover, the operators $L_n^{\rm H}$ satisfy the commutation relations
\beq
\left[L_m^{\rm H},L_n^{\rm H}\right] = (m-n) L_{m+n}^{\rm H}, \quad \forall\,m,n\ge-1.
\eeq
\end{theorem}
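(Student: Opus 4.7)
The strategy is to recognise $Z_{\rm H}$ as the formal conjugate of $Z$ by an exponential in a commutative subalgebra of ${\rm Vira}_{\rm like}$. Set
\beq
A := \sum_{k\geq 1} \frac{\sigma_{2k-1} B_{2k}}{(2k)!} L_{2k-1,2k},
\eeq
so that $Z_{\rm H} = e^{-A} Z$. Since the operators $L_{2k-1,2k}$ pairwise commute (they span the third Lie subalgebra highlighted in the introduction), $e^{\pm A}$ is well defined on the appropriate ring of formal series. Accordingly I would define
\beq
\widetilde L_n^{\rm H} := e^{-A} L_{n,0} e^A = \sum_{m\geq 0} \frac{(-1)^m}{m!} {\rm ad}_A^m (L_{n,0}), \quad n\geq -1.
\eeq
Then $\widetilde L_n^{\rm H} Z_{\rm H} = e^{-A} L_{n,0} Z = 0$ by the Virasoro constraints for $Z$, and the commutation relations follow at once from conjugation invariance:
\beq
[\widetilde L_m^{\rm H}, \widetilde L_n^{\rm H}] = e^{-A}[L_{m,0}, L_{n,0}] e^A = (m-n) \widetilde L_{m+n}^{\rm H}.
\eeq
Both assertions of the theorem therefore reduce to matching $\widetilde L_n^{\rm H}$ with the explicit expression stated.

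For that matching I would iterate the Virasoro-like bracket of Theorem~\ref{viralikealgebra}:
\beq
[L_{2k-1,2k}, L_{n', 2h'}] = \sum_{h''\geq k+h'} c_{2k-1,2k,n',2h',2h''}\, L_{n'+2k-1,2h''} + (\text{central shift when } n'=-1,\,k=1,\,h'=0).
\eeq
Because the structure constants vanish whenever $h''<k+h'$, after $m$ nested commutators starting from $L_{n,0}$ one is left with a single outer index $n + \sum_j (2k_j-1)$ and a non-decreasing sequence $0 = h_0 \leq h_1 \leq \cdots \leq h_m$ of inner indices satisfying $h_j \geq h_{j-1} + k_j$. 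Combining the BCH prefactor $(-1)^m/m!$, the product of coupling constants $\sigma_{2k_j-1} B_{2k_j}/(2k_j)!$, and the explicit formula for $c_{2k-1,2k,n',2h',2h''}$ from Proposition~\ref{corollary10425}, one recovers the stated expression. The binomials $\binom{2h_j}{2h_{j-1}}$ encode the insertion of the $j$-th commutator into the growing nested bracket, while the Stirling numbers $\left\{\begin{array}{c} 2h_j-2h_{j-1} \\ 2k_j-1\end{array}\right\}$ emerge from the change of basis between monomials and falling factorials implicit in the closed form of the structure constants; the $(2k_j)!$ in the denominator of $A$ cancels against a factorial produced by the structure constant, leaving the binomial denominator in the final formula.

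The case $n=-1$ is a degenerate endpoint: as noted in the proof of Theorem~\ref{viralikealgebra}, the range $k\leq h\leq [(2k-1)/2]$ in $[L_{2k-1,2k},L_{-1,0}]$ is empty, so the only contribution to $[A,L_{-1,0}]$ is the central constant $-l/2$ at $k=1$, giving $[A,L_{-1,0}]=-\sigma_1 l/24$ (using $B_2=1/6$); higher commutators vanish, the BCH series terminates after one step, and one recovers $L_{-1}^{\rm H}=L_{-1,0}+\sigma_1 l/24$ as in~\eqref{lminus1421}. The principal obstacle is therefore purely combinatorial: extracting the Stirling numbers and the telescoping binomial denominator from iterated structure-constant products, which I expect to reduce by induction on~$m$ to a standard identity expressing powers in terms of falling factorials applied to the explicit form of $c_{2k-1,2k,n',2h',2h''}$.
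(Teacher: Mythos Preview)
Your proposal is correct and follows essentially the same route as the paper: define $L_n^{\rm H}$ by conjugation of $L_{n,0}$ by $e^{-A}$, so that the constraints and the Virasoro relations are immediate, and then expand via iterated adjoints using the explicit structure constants $c_{2k-1,2k,n,2\ell,2h}$ of Proposition~\ref{corollary10425}. The only refinement is that no residual combinatorial reduction is needed: formula~\eqref{c2km12kn2l2h424} already delivers the Stirling numbers $\left\{\begin{smallmatrix}2h-2\ell\\2k-1\end{smallmatrix}\right\}$, the binomials $\binom{2h}{2\ell}$, and the factor $(2k)!/\binom{n+2k}{n+1}$ directly, so the nested commutator computation is a straightforward iteration rather than an inductive identity.
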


Inspired by Fabler--Pandharipande's formula~\eqref{hodgepartitionfunction420}, 
still under the assumption that $M$ is semisimple, let us consider 
a further deformation of the abstract Hodge partition function and the corresponding integrable hierarchy. 
Recall that the {\it Hodge integrable hierarchy} (for short, Hodge hierarchy) of~$M$ is constructed in~\cite{DLYZ},
such that the partition function~$Z_{\rm H}$ is a particular tau-function of the Hodge hierarchy.
Based on the genus zero Virasoro-like constraints~\eqref{viralikeconstraints}, we can define the following power series
with infinitely many parameters:
\beq
\label{viralikepartitionfunction0902}
Z_{\rm like} = Z_{\rm like}(\bt; \boldsymbol{r}; \e) := \left(\prod_{m=1}^\infty \prod_{1\le k \le [(m+1)/2]} \exp \left(r_{m,2k} L_{m,2k}\right)\right)  Z.
\eeq
Here, $\boldsymbol{r}=(r_{1,2},r_{2,2},r_{3,2},\dots)$ denotes an infinite vector of indeterminates,
the products mean operator compositions, and we note that unlike~\eqref{hodgepartitionfunction420} the operators
$L_{m,2k}$ are in general non-commutative~\eqref{structureviralike422-424}, and so in the above products we fix an order of the compositions as
follows: $$(m,k)=(1,2), (2,2), (3,2), (3,4), (4,2), (4,4), (5,2), (5,4), (5,6), \, {\rm ect.}$$
(We note that the definition of $Z_{\rm like}$ depends on the order of the compositions; 
the above order-fixing is just for simplicity.)
The following theorem will be proved in a forthcoming joint work with B.~Dubrovin and P.~Rossi.

\smallskip

\noindent {\bf Theorem A}.
{\it The power series $Z_{\rm like}$ is a particular tau-function of the topological solution to a
 not necessarily Hamiltonian tau-symmetric integrable system, which is an integrable deformation
 of the Principal Hierarchy depending on an infinite family of parameters $r_{m,2k}$, $m\geq1$, $k=1,\dots, [(m+1)/2]$.} 

\medskip

\noindent {\bf Organization of the paper.} In Section~\ref{section2},
we give a review on Frobenius manifolds
and present some properties of the regularized stress tensor $T(\lambda;\nu)$.
In Section~\ref{section3} we compute the commutators between Virasoro-like operators, prove Theorem~\ref{viralikealgebra},
and give descriptions about the essential structure constants of the Virasoro-like algebra.
In Section~\ref{section4} we prove Theorem~\ref{theoremcor415}.
In Section~\ref{section5} we prove Theorem~\ref{theorem30407}.

\smallskip

\noindent {\bf Acknowledgements.}
We wish to thank Boris Dubrovin for valuable suggestions.
Part of the work of D.Y. was done during his PhD studies at Tsinghua University;
he thanks Tsinghua for excellent working conditions and financial support. 
The work is partially supported by the National Key R and D Program of China 2020YFA0713100, 
and by NSFC grants 11725104, 12171268, 11661131005, 11890662.

\section{Properties of the regularized stress tensor}\label{section2}
In this section, we review the definition and the spectrum data around zero of a Frobenius manifold, 
and then study the regularized stress tensor $T(\lambda;\nu)$ in more details.

\subsection{Review on Frobenius manifolds}
A {\it Frobenius algebra (FA)} is a triple 
$\bigl(A^l, e,  \eta\bigr)$, where $A$ is a commutative and associative
algebra over~$\CC$ with the unity~$e$,  and $\eta$ is a symmetric and non-degenerate
bilinear form $A\times A\rightarrow \CC$ satisfying the condition
\[\eta\left(x\cdot y, z\right) = \eta\left(x, y \cdot z\right), \quad \forall \, x,y,z \in A.\]
A {\it Frobenius structure of charge~$d$} \cite{Du1,Du2} on a complex manifold~$M^l$ is a family of
Frobenius algebras $\bigl(A_p= T_p M, e_p, \eta_p\bigr)$, $p\in M$,
with the multiplication~``$\,\cdot_p$", the unity $e_p$ and the bilinear form $\eta_p$
depending {\it holomorphically} on~$p$, satisfying the following axioms:
\begin{itemize}
\item[1.] The metric $\eta$ is {\it flat}. Denote by~$\nabla$ the Levi--Civita connection with respect to~$\eta$.
It is required that
$ \nabla e = 0$.
\item[2.] Define a 3-tensor field by $c(X,Y,Z):= \eta \left(X\cdot Y, Z\right)$ for $X,Y,Z$ being arbitrary
 holomorphic vector fields on~$M$.
The 4-tensor field~$\nabla c$ is {\it totally symmetric}.
\item[3.] There exists a holomorphic vector field $E$ satisfying
\begin{align}
&\nabla\nabla E = 0 , \label{aE1}\\
& \left[E,  X \cdot Y\right] - \left[E, X\right] \cdot Y - X \cdot \left[E, Y\right] = X \cdot Y , \label{aE2} \\
& E \lan X, Y\ran - \lan [E,X] , Y\ran - \lan X , [E,Y]\ran = (2-d) \lan X, Y\ran. \label{aE3}
\end{align}
\end{itemize}
A complex manifold equipped with a Frobenius structure is called a {\it Frobenius manifold}. 
We assume in this paper that $\nabla E$ is {\it diagonalizable}.
The
{\it extended deformed flat connection}~\cite{Du1,Du2}, denoted by~$\widetilde\nabla$, 
is an affine connection on~$M\times \CC^*$,
defined by
\begin{align}
& \widetilde \nabla_X  Y  =  \nabla_X Y + z X\cdot Y,\\
& \widetilde \nabla_{\p_z} Y = \frac{\p Y}{\p z} + E \cdot Y - \frac1z \left(\frac{2-d}2 1 - \nabla E \right)Y,\\
& \widetilde \nabla_{\p_z}  \p_z = \widetilde \nabla_X \p_z = 0,
\end{align}
where $X,Y$ are arbitrary horizontal holomorphic vector fields on $M\times\mathbb{C}^*$, $z\in\CC^*$.
The definition of a Frobenius manifold implies that  $\widetilde \nabla$ is flat~\cite{Du1}.

Take $v=(v^1,\dots,v^l)$
a system of local coordinates satisfying  
\begin{align}
& \eta\left(\frac{\p}{\p v^\alpha},\frac{\p}{\p v^\beta}\right) =: \eta_{\alpha\beta} ~ \mbox{ are constants}, \label{1005flat}\\
& e = \frac{\p}{\p v^1}, \quad
E = \sum_{\beta=1}^l \left(1-\frac d2-\mu_\beta\right) v^\beta \frac{\p}{\p v^\beta} + \sum_{\beta=1}^l r_\beta \frac{\p}{\p v^\beta}, \label{eEv}
\end{align}
where $\mu_\beta$, $r_\beta$ are constants. 
(Local coordinates satisfying~\eqref{1005flat} are called {\it flat coordinates} of~$\eta$.)
The existence for the choice of flat coordinates satisfying~\eqref{eEv} follows from the definition of a Frobenius manifold~\cite{Du1,Du2}.
Denote
$c_{\alpha\beta\gamma} := \eta (\p_{v^\alpha} \cdot \p_{v^\beta} , \p_{v^\gamma} )$,
and we will use $(\eta_{\alpha\beta})$ and
its inverse $(\eta^{\alpha\beta}):=(\eta_{\alpha\beta})^{-1}$ to lower and raise the Greek indices, respectively. For example,
$v_\alpha:=\eta_{\alpha\beta}v^\beta$, $c_{\alpha\beta}^\gamma:=\eta^{\gamma\rho} c_{\rho\alpha\beta}$, etc.

A holomorphic function $f=f(v;z)$ defined on a certain open subset of $M\times \CC^*$ is called {\it $\widetilde \nabla$-flat}, if 
$\widetilde \nabla df=0$. The flatness of~$\widetilde\nabla$
ensures the local existence of $n$ linearly independent $\widetilde \nabla$-flat holomorphic functions
$\tilde v_\alpha(v;z)$ on $M \times \mathbb{C}^*$, which will be called the {\it deformed flat coordinates}. 
Let us recall their construction. For a $\widetilde \nabla$-flat holomorphic function
$f(v;z)$ on $M \times \mathbb{C}^*$, denote $y^\alpha=\eta^{\alpha\beta} \p f/\p v^\beta$.
Then the vector-valued function $y:=(y^1,\dots,y^l)^T$ satisfies the following system of linear equations:
\begin{align}
\frac{\p y}{\p v^\alpha }  & =  z C_\alpha y, \label{linear1} \\
\frac{d y}{d z} & = \left(\mathcal{U} + \frac{\mu}{z}\right) y,\label{linear2}
\end{align}
where $C_\alpha$, $\mathcal{U}$ and $\mu$ are matrices defined by
$C_\alpha =  \left(c_{\alpha\beta}^\gamma\right)$,
$\mathcal{U} = \left(E^\gamma c_{\gamma\beta}^\alpha\right)$,
$\mu = {\rm diag}\left(\mu_1,\dots,\mu_l\right)$.

Observe that the ODE system~\eqref{linear2} has two singularities on the complex $z$-plane: $z=0$ and $z=\infty$.
The singularity $z=0$ is Fuchsian. It is shown~\cite{Du1,Du2} that
there exists a fundamental solution matrix~$\mathcal{Y}$
to equations \eqref{linear1}--\eqref{linear2} of the following form:
\begin{align}
\mathcal{Y} =  \Phi (v;z) z^\mu z^R, \quad v \in M, z \in \CC^*, \label{yab0411}
\end{align}
where 
\[\Phi (v;z)=\sum_{k\geq 0} \Phi_k(v) z^k 
\]
is an analytic matrix-valued function on $M\times \CC$ satisfying 
the normalization condition
\begin{equation} \label{yb}
\Phi_0(v) \equiv I, \quad 
\eta^{-1} \Phi(v;-z)^T \eta \Phi(v;z) \equiv I, 
\end{equation}
and~$R$ is a constant matrix satisfying the conditions
\begin{align}
& z^\mu R z^{-\mu} =: \sum_{s \geq 1} R_s z^s \quad (\mbox{in particular } R=R_1+R_2+\cdots),\\
& \eta^{-1} R_s^T \eta = (-1)^{s+1} R_s, \quad s\ge1,\\
& (R_s)^\alpha_\beta \neq 0  \quad \mbox{only if } ~ \mu_\alpha-\mu_\beta = s, \quad  s\ge1.
\end{align}
The matrix~$R$ may not be unique (cf.~\cite{Du1,Du2} or
Section~\ref{section2}). The matrices $\mu,R$ are called the {\it spectrum data} around $z=\infty$ of the Frobenius 
manifold. 
From \cite{Du1,Du2} we also know that locally there exist analytic functions $\theta_\alpha(v;z)$
on $M \times \mathbb{C}$ such that
\beq
\Phi^{\alpha}_\beta(v;z) = \eta^{\alpha\gamma} \frac{\p \theta_\beta(v;z)}{\p v^\gamma}.
\eeq
So the functions $\tilde v_\alpha(v;z)$ defined via
\[
\left(\tilde v_1(v;z), \dots, \tilde v_l(v;z)\right) = \left(\theta_1(v;z), \dots, \theta_l(v;z)\right) z^\mu z^R
\]
form a system of deformed flat coordinates for~$M$.
Write $\theta_\alpha(v;z)=:\sum_{k\ge0} \theta_{\alpha,k}(v) z^k$, then we have
\begin{align}
& \p_\alpha\p_\beta (\theta_{\gamma,k+1}) = c_{\alpha\beta}^\sigma \p_\sigma (\theta_{\gamma,k}), \label{theta1409}\\
& \frac{\p\theta_{\alpha,0}}{\p v^\beta} = \eta_{\alpha\gamma}, \label{theta2409}\\
& \eta \left(\nabla \theta_\alpha(v;z) , \nabla \theta_\beta(v;-z) \right) = \eta_{\alpha\beta} , \label{theta3409}\\
& \mathcal{L}_E \left(\frac{\p \theta_{\alpha,k}}{\p v^\beta}\right) =
\left(k+\mu_\alpha+\mu_\beta\right) \frac{\p \theta_{\alpha,k}}{\p v^\beta}
+ \sum_{r=1}^k (R_r)^\gamma_\alpha \frac{\p \theta_{\gamma,k-r}}{\p v^\beta}. \label{theta4409}
\end{align}
We can further normalize $\theta_{\alpha,k}$ as follows:
\beq\label{theta5409}
\theta_{\alpha,0} = v_\alpha, \qquad  \frac{\p \theta_{\alpha,k+1}(v)}{\p v^1} = \theta_{\alpha,k}(v), \quad k\geq0.
\eeq
A choice of~$\{\theta_{\alpha,k}\}_{k\geq0}$ satisfying the above~\eqref{theta1409}--\eqref{theta5409} is called a
{\it calibration} (cf.~\cite{DLYZ}) on~$M$.
We will fix a particular calibration, and the Frobenius manifold~$M$ is now
by default calibrated.

The \emph{Principal Hierarchy}~\cite{Du1,DZ-norm} of~$M$ is defined by
\beq\label{principal hierarchy}
\f{\pa v^\alpha}{\pa t^{\beta,k}} = \eta^{\alpha\gamma} \p_x\left(\f{\p\theta_{\beta,k+1}}{\pa v^\gamma}\right),\quad k\ge 0,
\eeq
where $t^{\beta,k}$ are time variables with~$t^{1,0}$ identified with~$x$.
We are interested in solutions $v=v(\bt)$ to the Principal Hierarchy 
that are power series of $\bt_{>0}$. 
The {\it topological solution} $v_{\rm top}(\bt)$ to the Principal Hierarchy is defined 
as the unique power-series-in-$\bt_{>0}$ solution satisfying 
\beq\label{topoini1005}
v_{\rm top}^\alpha(\bt) \big|_{\bt_{>0}={\bf 0}} = t^{\alpha,0}.
\eeq
Alternatively, $v_{\rm top}(\bt)$ can also be determined by the {\it genus zero Euler--Lagrange equation}
\beq
\sum_{k\geq 0} \widetilde{t}^{\alpha,k} \frac{\p \theta_{\alpha,k}}{\p v^\gamma} = 0.
\eeq
It is indicated in~\cite{DZ2} (cf.~the Section~3 of~\cite{DZ2}) that the coefficients of monomials of $\bt_{>0}$ 
of $v_{\rm top}^\alpha(\bt)$ 
are locally holomorphic functions; this allows one to do Taylor expansions of these coefficients 
at different particular points.

Following~\cite{Du1,Du4,DZ-norm},
define a family of analytic functions $\Omega_{\alpha,i;\beta,j}(v)$ on~$M$ via
\beq
\sum_{i,j\geq 0} \Omega_{\alpha,i;\beta,j}(v) z^i w^j =
\frac{\left\langle \nabla\theta_\alpha(v;z), \nabla\theta_{\beta}(v;-w) \right\rangle-\eta_{\alpha\beta}}{z+w}.
\eeq
The (topological) {\it genus zero free energy}~$\mathcal{F}_0=\mathcal{F}_0(\bt)$ associated with~$M$ is then defined as
a particular power series of ${\bf t}_{>0}$, explicitly given by
\beq\label{F_0}
\mathcal{F}_0(\bt) :=
\frac12 \sum_{i,j\geq 0} \widetilde t^{\alpha,i} \widetilde t^{\beta,j} \Omega_{\alpha,i;\beta,j}(v_{\rm top}(\bt)).
\eeq

In the next subsection we will derive explicit expressions of~$f_\alpha(\la;\nu)$ and~$T(\la;\nu)$
defined in the introduction. 

\subsection{Properties of $L_m(\nu)$}
Before doing the derivations, 
it is helpful to recall from \cite{DZ2,DZ-norm} some useful properties of $(\mu,R)$. First, any polynomial~$P$ of $R_1,R_2,\dots$
can be decomposed uniquely in the following way:
\begin{align}
& P(R_1,R_2,\dots) = \sum_{r\ge0} \left(P(R_1,R_2,\dots)\right)_{r}, \label{PRdecomp430}\\
& z^\mu (P(R_1,R_2,\dots))_r z^{-\mu} = z^r \left(P(R_1,R_2,\dots)\right)_{r},
\end{align}
where the summation on the right-hand side of~\eqref{PRdecomp430} only contains a finite number of terms.
Other useful properties for $(\mu,R)$ are listed here:
\begin{align}
& \eta^{-1} \mu^T \eta = - \mu, \label{p10411}\\
& \left[\mu, \left(P(R_1,R_2,\dots)\right)_r\right] = r \left(P(R_1,R_2,\dots)\right)_r, \label{p0924}\\
& f(\mu) \left(P(R_1,R_2,\dots)\right)_r = \left(P(R_1,R_2,\dots)\right)_r f(\mu+r), \quad \forall\,r\geq0, \label{p20411}\\
& \eta^{-1} \left(\left(R^k\right)_r\right)^T \eta = (-1)^{k+r} \left(R^k\right)_r  , \quad \forall\,k,r\geq 0,\label{p30411} \\
& \lambda^{-R^*} = e^{\pi i \mu} \lambda^{R} e^{-\pi i \mu} = e^{-\pi i \mu} \lambda^{R} e^{\pi i \mu} \label{p40415}
\end{align}
for an arbitrary power series~$f(z)$ and an arbitrary polynomial $P(R_1,R_2,\dots)$.
Denote 
\[
a_p := (a_{1,p},\dots,a_{l,p}), \quad a^p := \eta^{-1} a_p^T, \qquad p\in \ZZ+\f12,
\]
then we have the results of the following two lemmas that are given in~\cite{DZ-norm}; see the proof of Theorem 3.7.11 and Lemma 4.1.7 therein.

\begin{lemma}[\cite{DZ-norm}] The series $f_\alpha(\la,\nu)$ has the following explicit expression:
\begin{equation}\label{natural_with_nu_explicit}
f_\alpha(\la,\nu) = \sum_{p\in\mathbb{Z}+\f12 }
a_{\beta,p} \sum_{r\geq 0} \left(\left(e^{R\pa_{\nu}}\right)_r \left(\Gamma(\mu+\nu+p+r)\right) \la^{-(\mu+p+r+\nu)}\la^{-R}\right)^\beta_\alpha .
\end{equation}
\end{lemma}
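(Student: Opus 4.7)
The plan is to evaluate each $p$-term of~\eqref{natural_with_nu} as a Gamma-function integral after the scaling $u=\la z$, and then reassemble the result using the graded commutation between $R$ and $z^\mu$.

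First, I would substitute $u=\la z$, writing $(u/\la)^\mu = u^\mu\la^{-\mu}$ and $(u/\la)^R = u^R\la^{-R}$ (both legitimate since in each factor the exponent is a scalar multiple of a single matrix), to turn the integral into
\[
\la^{-(p+\nu)}\int_0^\infty u^{p+\nu-1}\, u^\mu\,\la^{-\mu}\,u^R\,\la^{-R}\,e^{-u}\,du.
\]
Then I would push $\la^{-\mu}$ through $u^R$. The identity $f(\mu)(P)_r = (P)_r f(\mu+r)$ from~\eqref{p20411}, applied to $f(x)=\la^{-x}$, yields $\la^{-\mu}(R^k)_r = \la^{-r}(R^k)_r\la^{-\mu}$, hence
\[
\la^{-\mu}\,u^R = \sum_{r\geq 0} \la^{-r}\,(u^R)_r\,\la^{-\mu}, \qquad (u^R)_r := \sum_{k\geq 0}(R^k)_r\frac{(\log u)^k}{k!}.
\]
Pulling the scalar $\la^{-r}$ and the constants $\la^{-\mu},\la^{-R}$ outside reduces the task to computing $J_r := \int_0^\infty u^{p+\nu-1}\, u^\mu\,(u^R)_r\,e^{-u}\,du$ for each $r$.

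Next, the same commutation applied with $f(x)=u^x$ gives $u^\mu(R^k)_r = u^r(R^k)_r u^\mu$; the scalar $u^r$ then merges with $u^{p+\nu-1}$ and each constant $(R^k)_r$ can be pulled out of the integral. What remains is
\[
\int_0^\infty u^{\mu+p+\nu+r-1}(\log u)^k e^{-u}\,du = \pa_\nu^k\,\Ga(\mu+p+\nu+r),
\]
because $\nu$ enters the exponent linearly. Summing over $k$ with weight $1/k!$ identifies $J_r$ with $(e^{R\pa_\nu})_r\Ga(\mu+p+\nu+r)$. Finally, combining the scalar $\la^{-(p+\nu+r)}$ with the matrix $\la^{-\mu}$ into $\la^{-(\mu+p+r+\nu)}$ reproduces exactly the right-hand side of~\eqref{natural_with_nu_explicit}.

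The only real obstacle is bookkeeping the non-commutativity of $\mu$ and $R$: each rearrangement above relies on a separate instance of~\eqref{p20411}, with $f(x)=\la^{-x}$ or $f(x)=u^x$. The infinite sums pose no problem, since for each fixed $k$ the graded decomposition $R^k=\sum_r (R^k)_r$ contains only finitely many nonzero terms by the remark following~\eqref{PRdecomp430}.
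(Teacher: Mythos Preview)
Your argument is correct: the substitution $u=\la z$, the repeated use of~\eqref{p20411} to commute $\la^{-\mu}$ and $u^\mu$ past the graded pieces $(R^k)_r$, and the identification $\int_0^\infty u^{\mu+p+\nu+r-1}(\log u)^k e^{-u}\,du=\pa_\nu^k\,\Gamma(\mu+p+\nu+r)$ all go through as you describe, and the final recombination of the scalar $\la^{-(p+\nu+r)}$ with $\la^{-\mu}$ yields exactly~\eqref{natural_with_nu_explicit}. Note that the paper itself does not supply a proof of this lemma; it simply quotes the result from~\cite{DZ-norm} (referring to the proof of Theorem~3.7.11 and Lemma~4.1.7 there), so there is no in-paper argument to compare against---your derivation is precisely the kind of direct computation one expects the cited reference to contain.
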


\begin{lemma}[\cite{DZ-norm}] The series $L_m(\nu)$, $m\in\ZZ$, has the following expression:
\begin{equation}\label{Virasoro-like_op_with_nu}
L_m(\nu) = \f12  \sum_{p,q\in\mathbb{Z}+\f12, r\geq 0 \atop p+q+r=m} 
: a_p N^p_q(r,\nu)  a^q : + \f{\delta_{m,0}}{4} {\rm tr} \left(\f{1}{4}-\mu^2\right),
\end{equation}
where
\begin{equation}\label{N}
N^p_q(r,\nu) := \f1{\pi} \left( \left(e^{R\pa_\nu}\right)_r \left(
\Gamma(\mu+\nu+p+r+1)  \cos\pi(\mu+\nu)  \Gamma(-\mu-\nu+q+1)\right)\right).
\end{equation}
\end{lemma}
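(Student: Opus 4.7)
The plan is to insert the explicit series for $f_\alpha(\la;\nu)$ given by the preceding lemma into the definition \eqref{Virasoro_field_with_nu} of $T(\la;\nu)$, perform the $\la$-derivatives, combine the matrix factors using the spectral identities \eqref{p10411}--\eqref{p40415}, and finally expand the resulting double series in negative powers of~$\la$ and read off the coefficient of $\la^{-(m+2)}$ using~\eqref{TnuLmnu427}. The entire proof is essentially a bookkeeping computation once one knows which matrix identity produces the $\cos\pi(\mu+\nu)$ factor inside $N^p_q(r,\nu)$.

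First, I would differentiate \eqref{natural_with_nu_explicit} term by term. The factor $\la^{-(\mu+p+r+\nu)}$ pulls down $-(\mu+\nu+p+r)$, which by $z\,\Gamma(z)=\Gamma(z+1)$ combines with the already-present $\Gamma(\mu+\nu+p+r)$ to give $-\Gamma(\mu+\nu+p+r+1)$. Hence
\[
\pa_\la f_\alpha(\la;\nu)
=-\sum_{p\in\ZZ+1/2}a_{\beta,p}\sum_{r\geq 0}\Bigl(\bigl(e^{R\pa_\nu}\bigr)_r\!\bigl(\Gamma(\mu+\nu+p+r+1)\bigr)\,\la^{-(\mu+p+r+\nu)-1}\la^{-R}\Bigr)^\beta_{\ \alpha},
\]
and analogously for $\pa_\la f_\beta(\la;-\nu)$ with $\nu\mapsto-\nu$ and summation index $(q,s)$.

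Next I would substitute into \eqref{Virasoro_field_with_nu}. After raising the $\beta$-index against $G^{\alpha\beta}(\nu)$ and using $a^q=\eta^{-1}a_q^T$, the normal-ordered bilinear becomes a double sum over $p,q,r,s$ whose coefficient is a matrix product of the form
\[
\bigl(e^{R\pa_\nu}\bigr)_r\!\bigl(\Gamma(\mu{+}\nu{+}p{+}r{+}1)\bigr)\,\la^{-\mu-p-r-\nu-1}\la^{-R}\,M(\nu)\,\la^{-R^*}\la^{-\mu^*-q-s+\nu-1}\bigl(\bigl(e^{R\pa_\nu}\bigr)_s\bigl(\Gamma(\mu{-}\nu{+}q{+}s{+}1)\bigr)\bigr)^*,
\]
where $M(\nu)=-\tfrac{1}{2\pi}\bigl(e^{\pi iR}e^{\pi i(\mu+\nu)}+e^{-\pi iR}e^{-\pi i(\mu+\nu)}\bigr)$ is the bracket appearing in $G^{\alpha\beta}(\nu)$ and the $*$ denotes the $\eta$-transpose. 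Now the key simplification uses \eqref{p40415}: both occurrences of $\la^{-R}$ and $\la^{-R^*}$ sandwiching $M(\nu)$ can be moved through the $e^{\pm\pi iR}$ at the cost of shifting $e^{\pm\pi i\mu}$, after which the two exponentials collapse to $2\cos\pi(\mu+\nu)$ (acting on the appropriate shifted argument). This is precisely the identity needed to produce the factor $\cos\pi(\mu+\nu)/\pi$ inside $N^p_q(r,\nu)$. The remaining $s$-sum on the second factor is reabsorbed by combining $(e^{R\pa_\nu})_r$ and $(e^{R\pa_\nu})_s$ into a single operator $(e^{R\pa_\nu})_{r+s}$ acting on the product $\Gamma(\mu+\nu+p+r+1)\cos\pi(\mu+\nu)\Gamma(-\mu-\nu+q+1)$, using the commutation rules \eqref{p0924}--\eqref{p20411} that govern how $(e^{R\pa_\nu})_r$ shifts $\mu$.

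Finally, combining the $\la$-powers yields $\la^{-(p+q+r+s)-2}$ (the matrix $\la^{-\mu}$ factors having been absorbed via \eqref{p0924}), so setting $r+s\to r$ and $p+q+r=m$ produces exactly the sum in~\eqref{Virasoro-like_op_with_nu}. The extra constant term $\tfrac{1}{4\la^2}\,\mathrm{tr}(\tfrac14-\mu^2)$ in \eqref{Virasoro_field_with_nu} contributes only to the coefficient of $\la^{-2}$, i.e. to $L_0(\nu)$, accounting for the $\delta_{m,0}$ summand. The main obstacle will be the bookkeeping in the middle step: tracking how $(e^{R\pa_\nu})_r$ commutes past $\la^{-\mu}$ and $\la^{-R}$ and verifying that, after the dust settles, the $(e^{R\pa_\nu})_{r+s}$ acts on a single scalar-valued meromorphic function of $\mu+\nu$ rather than on a genuinely non-commutative product—this is where the spectral identities \eqref{p10411}--\eqref{p40415} for $(\mu,R)$ are used in an essential way.
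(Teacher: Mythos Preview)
The paper does not actually prove this lemma: it is quoted verbatim from \cite{DZ-norm}, with a pointer to the proof of Theorem~3.7.11 and Lemma~4.1.7 there, so there is no in-paper argument to compare against. Your outline is the natural derivation and is essentially the computation carried out in \cite{DZ-norm}.

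One small correction to your first step. When you differentiate the term
\[
\bigl(e^{R\pa_\nu}\bigr)_r\!\bigl(\Gamma(\mu+\nu+p+r)\bigr)\,\la^{-(\mu+p+r+\nu)}\la^{-R}
\]
with respect to~$\la$, you cannot simply ``pull down $-(\mu+\nu+p+r)$'': the factor $\la^{-R}$ also contributes a $-R/\la$ piece, and moreover the scalar $(\mu+\nu+p+r)$ must be commuted back inside the operator $(e^{R\pa_\nu})_r$ before it can combine with~$\Gamma$. Both issues do resolve (using \eqref{p20411}), but the clean way to see that your displayed formula for $\pa_\la f_\alpha$ is correct is to go back to the integral definition \eqref{natural_with_nu} and observe that bringing down a factor of~$-z$ from $e^{-\la z}$ amounts to the shift $\nu\mapsto\nu+1$ in the measure $dz/z^{1-\nu}$; hence $\pa_\la f_\alpha(\la;\nu)=-f_\alpha(\la;\nu+1)$, which is exactly your expression. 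After that, your description of how \eqref{p40415} collapses $\la^{-R}\,G(\nu)\,\la^{-R^*}$ into the factor $\cos\pi(\mu+\nu)/\pi$, and how the two shift operators merge into a single $(e^{R\pa_\nu})_{r+s}$ acting on a scalar function of $\mu+\nu$, is the heart of the computation and is correct.
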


We also have the following lemmas on further properties of the regularized stress tensor.
\begin{lemma}\label{symmetry_N}
For $p,q\in \ZZ+\f12$, the matrix-valued function $N^p_q(r,\nu)$ satisfies
the identity
\begin{equation}
\eta^{-1} N^p_q(r,\nu)^T \eta = N^q_p(r,-\nu).
\end{equation}
\end{lemma}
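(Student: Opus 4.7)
The plan is to compute $\eta^{-1}N^p_q(r,\nu)^T\eta$ from the explicit formula~\eqref{N} using the listed spectrum properties, and to identify the result with $N^q_p(r,-\nu)$. Expand $(e^{R\pa_\nu})_r=\sum_{k\ge0}(R^k)_r\,\pa_\nu^k/k!$ and set $G(\nu):=\Gamma(\mu{+}\nu{+}p{+}r{+}1)\cos\pi(\mu{+}\nu)\Gamma({-}\mu{-}\nu{+}q{+}1)$, so that $\pi N^p_q(r,\nu)=\sum_k(R^k)_r\pa_\nu^k G(\nu)/k!$. Taking the transpose reverses the matrix order, and the identities \eqref{p10411} (which gives $\eta^{-1}f(\mu)^T\eta=f(-\mu)$ for any power series $f$) and \eqref{p30411} yield
\[
\eta^{-1}N^p_q(r,\nu)^T\eta=\frac{1}{\pi}\sum_{k\ge 0}\frac{(-1)^{k+r}}{k!}\pa_\nu^k\!\bigl[\eta^{-1}G(\nu)^T\eta\bigr](R^k)_r .
\]

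Since $p,q\in\ZZ+\tfrac12$, the Gamma reflection formula $\Gamma(z)\Gamma(1-z)=\pi/\sin\pi z$ applied to $\Gamma(-\mu-\nu+q+1)\cos\pi(\mu+\nu)$ eliminates the trigonometric factor and collapses $G(\nu)$ to a polynomial of degree $p+q+r+1$ in $\mu+\nu$:
\[
G(\nu)=(-1)^{q-1/2+1}\pi\prod_{j=-q}^{p+r}(\mu+\nu+j).
\]
Conjugating by $\eta$ (that is, $\mu\mapsto-\mu$), then re-indexing $j\mapsto-j$ and extracting $(-1)^{p+q+r+1}$ from reversing the $p+q+r+1\in\ZZ$ factors, gives $\eta^{-1}G(\nu)^T\eta=(-1)^{p-1/2+r+1}\pi\prod_{j=-(p+r)}^{q}(\mu-\nu+j)$.

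Finally, \eqref{p20411} lets one move each $(R^k)_r$ past the polynomial in $\mu$ from right to left at the cost of shifting $\mu\mapsto\mu+r$; the product range becomes $\prod_{j=-p}^{q+r}(\mu-\nu+j)$. Carrying out the same reflection collapse on $\Gamma(\mu+\nu+q+r+1)\cos\pi(\mu+\nu)\Gamma(-\mu-\nu+p+1)$ and evaluating at $-\nu$ shows that this shifted polynomial is exactly $(-1)^r$ times the result; combining the signs $(-1)^{k+r}\cdot(-1)^r=(-1)^k$ with the chain-rule factor $(-1)^k$ coming from $\pa_\nu^k[\,\cdot\,(-\nu)]$, all signs cancel and the sum matches $N^q_p(r,-\nu)$ term by term. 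The main difficulty is tracking the four independent sign sources—the transpose of $(R^k)_r$ via \eqref{p30411}, the $\mu\mapsto-\mu$ conjugation, the reversal of the $p+q+r+1$ factors (whose well-definedness crucially relies on the half-integrality of $p,q$), and the chain rule in $\pa_\nu$—whose clean cancellation delivers the claimed identity.
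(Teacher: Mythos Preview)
Your proof is correct and follows the same overall strategy as the paper: expand $(e^{R\partial_\nu})_r$, take the transpose, apply~\eqref{p10411} and~\eqref{p30411}, then use~\eqref{p20411} to commute $(R^k)_r$ past the $\mu$-dependent factor, and finally identify the result with $N^q_p(r,-\nu)$ term by term.

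The one substantive difference is your detour through the Gamma reflection formula to collapse $G(\nu)$ into the finite product $\prod_{j=-q}^{p+r}(\mu+\nu+j)$. The paper never does this: it keeps the expression in the form $\Gamma(\cdots)\cos\pi(\cdots)\Gamma(\cdots)$ throughout, and the only trigonometric input it needs is the elementary periodicity $\cos\pi(\mu-\nu+r)=(-1)^r\cos\pi(\mu-\nu)$ for integer~$r$. This makes the paper's computation shorter and avoids one layer of sign bookkeeping (your ``reversal of the $p+q+r+1$ factors'' step). Your route does have the minor advantage of making explicit why the half-integrality of $p,q$ matters, but note that your phrase ``polynomial of degree $p+q+r+1$'' is only literally accurate when $p+q+r+1\ge 0$; for general $p,q\in\ZZ+\tfrac12$ the ratio $\Gamma(z+p+r+1)/\Gamma(z-q)$ is a rational function, though your sign analysis still goes through unchanged in that case. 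In the paper's applications one always has $p+q+r=m\ge -1$, so this never actually arises.
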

\begin{proof} By using~\eqref{N} and the properties~\eqref{p10411}--\eqref{p30411} we have
\begin{align}
&\eta^{-1} N^p_q(r,\nu)^T \eta \nn\\
&\quad =\f{\eta^{-1}}{\pi}  \left\{\sum_{k=0}^\infty
\f{\left(R^k\right)_r}{k!} \pa_\nu^k \left(\Gamma(\mu+\nu+p+r+1)\cos\pi(\mu+\nu)\Gamma(- \mu-\nu+q+1)\right)\right\}^T \eta \nn \\
&\quad =\f1{\pi}  \sum_{k=0}^\infty \f{\pa_\nu^k}{k!}
\left(\Gamma( \mu-\nu+q+1)\cos\pi(-\mu+\nu)\Gamma(- \mu+\nu+p+r+1)\right)(-1)^{k+r} \left(R^k\right)_r \nn \\
&\quad =\f1{\pi} \sum_{k=0}^\infty \f{\left(R^k\right)_r(-\pa_\nu)^k}{k!}\left(\Gamma( \mu-\nu+q+r+1)\cos\pi(-\mu+\nu-r)\Gamma(- \mu+\nu+p+1)\right)(-1)^r \nn \\
&\quad =\f1{\pi}  \left(e^{-R\pa_\nu}\right)_r
\left(\Gamma( \mu-\nu+q+r+1)\cos\pi(\mu-\nu)\Gamma(- \mu+\nu+p+1)\right),\nn
\end{align}
which proves the lemma.
\end{proof}

\begin{lemma}\label{symmetry}
For every $m\in\mathbb{Z}$, $L_m(\nu)$ is invariant under $\nu\mapsto -\nu$, i.e.,
\beq
L_m(\nu) = L_m(-\nu).
\eeq
\end{lemma}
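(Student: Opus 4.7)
The plan is to deduce $L_m(\nu)=L_m(-\nu)$ directly from the explicit formula~\eqref{Virasoro-like_op_with_nu} together with Lemma~\ref{symmetry_N}. Since the $\delta_{m,0}$-contribution $\tfrac{\delta_{m,0}}{4}\,\mathrm{tr}\bigl(\tfrac14-\mu^2\bigr)$ is independent of $\nu$, it is enough to show that
\[
S_m(\nu) \;:=\; \sum_{\substack{p,q\in\ZZ+1/2,\ r\geq 0 \\ p+q+r=m}} \;:a_p\, N^p_q(r,\nu)\, a^q:
\]
satisfies $S_m(\nu)=S_m(-\nu)$.

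First I would relabel the summation variables by $(p,q)\mapsto (q,p)$; this is legitimate because the constraint $p+q+r=m$ is symmetric in $p$ and $q$. This gives
\[
S_m(\nu) \;=\; \sum_{p+q+r=m}\; :a_q\, N^q_p(r,\nu)\, a^p:.
\]
The next step is a general reordering identity, valid for any $l\times l$ matrix $M$,
\[
:a_q\, M\, a^p:\;=\;:a_p\,\bigl(\eta^{-1}\, M^T\, \eta\bigr)\, a^q:,
\]
which I would verify by expanding the scalar $a_q M a^p$ in components, using $a^p=\eta^{-1}a_p^T$, and freely commuting the operators $a_{\alpha,q}$ and $a_{\gamma,p}$ inside the normal ordering (their commutators being c-numbers that drop out of $:~:$), and then repackaging the contractions using the symmetry of $\eta$.

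Finally, applying Lemma~\ref{symmetry_N} with $p$ and $q$ interchanged yields $\eta^{-1}\,N^q_p(r,\nu)^T\,\eta = N^p_q(r,-\nu)$, so that
\[
S_m(\nu) \;=\; \sum_{p+q+r=m}\; :a_p\, N^p_q(r,-\nu)\, a^q: \;=\; S_m(-\nu),
\]
which is the desired identity. I do not foresee a genuine obstacle here; the only thing requiring a moment's care is verifying that the $a_{\alpha,p}$ really do commute freely inside $:~:$, which is immediate from the canonical relation $[a_{\alpha,p},a_{\beta,q}] = (-1)^{p-1/2}\eta_{\alpha\beta}\delta_{p+q,0}$ together with the definition of normal ordering. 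The content of the lemma is thus precisely the bilinear-form symmetry of the matrix coefficients $N^p_q(r,\nu)$ that was established in Lemma~\ref{symmetry_N}, together with the bosonic symmetry of the product $:a_p\cdots a^q:$.
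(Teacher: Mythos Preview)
Your proposal is correct and follows essentially the same route as the paper: both arguments combine the transpose symmetry of Lemma~\ref{symmetry_N} with the bosonic reordering identity $:a_q\,M\,a^p:\,=\,:a_p\,(\eta^{-1}M^T\eta)\,a^q:$ inside the normal ordering, together with a relabeling of the summation indices. The paper merely runs the computation starting from $L_m(-\nu)$ rather than from $S_m(\nu)$, but the content is identical.
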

\begin{proof} We have
\begin{align}
L_m(-\nu)
&=\f12\sum_{p\in\mathbb{Z}+\f12 \atop r\ge0 }:a_p  \eta^{-1} \eta  N^p_{m-r-p}(r,-\nu)  \eta^{-1}  a_{m-r-p}^T: \nn \\
&=\f12\sum_{p\in\mathbb{Z}+\f12 \atop r\ge0 }:a_p  \eta^{-1} (N_p^{m-r-p}(r,\nu))^T  a_{m-r-p}^T: \nn \\
&=\f12\sum_{p\in\mathbb{Z}+\f12 \atop r\ge0 }:a_{m-r-p} N^{m-r-p}_{p}(r,\nu)  \eta^{-1}  a^T_p:  \; = L_m(\nu). \nn
\end{align}
The lemmas is proved.
\end{proof}
From Lemma~\ref{symmetry} and equation~\eqref{N} we know that \eqref{evenpoly} holds true.

By using
Lemma~\ref{symmetry_N}
we can write $L_m(\nu)$, $m\geq-1$, into the following convenient form:
\begin{equation}\label{L_m_R}
L_m(\nu) = \f12 \sum_{p,q\in\mathbb{Z}+\f12, r\geq 0 \atop p+q+r=m}
: a_p M^p_q(r,\nu) a^q : + \f{\delta_{m,0}}{4} {\rm tr} \left(\f{1}{4}-\mu^2\right),
\end{equation}
where $M^p_q(r,\nu)$ is defined by
\begin{equation}\label{M}
M^p_q(r,\nu) = \f12 \left(N^p_q(r,\nu)+N^p_q(r,-\nu)\right).
\end{equation}

\begin{lemma}
For arbitrary $p,q\in\mathbb{Z}+\f12$ the following identity holds true:
\begin{equation}\label{etaMtransp413}
\eta^{-1} M^p_q(r,\nu)^T \eta = M^q_p(r,\nu).
\end{equation}
\end{lemma}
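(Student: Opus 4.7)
The plan is to derive this identity as an immediate consequence of Lemma~\ref{symmetry_N} together with the symmetrization built into the definition~\eqref{M} of $M^p_q(r,\nu)$. The key observation is that Lemma~\ref{symmetry_N} gives $\eta^{-1} N^p_q(r,\nu)^T \eta = N^q_p(r,-\nu)$, which swaps the upper and lower indices \emph{and} flips the sign of $\nu$, while $M^p_q$ is the average of $N^p_q$ at $\nu$ and at $-\nu$. Averaging kills the sign flip and leaves a genuine index-swap symmetry.

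Concretely, I would proceed as follows. First, apply $\eta^{-1}(\,\cdot\,)^T\eta$ to the definition
\[
M^p_q(r,\nu) \;=\; \tfrac12\bigl(N^p_q(r,\nu) + N^p_q(r,-\nu)\bigr),
\]
using linearity and the fact that transposition commutes with scalar multiplication by~$1/2$. Second, apply Lemma~\ref{symmetry_N} separately to each of the two summands, yielding
\[
\eta^{-1} M^p_q(r,\nu)^T \eta \;=\; \tfrac12\bigl(N^q_p(r,-\nu) + N^q_p(r,\nu)\bigr).
\]
Third, recognize the right-hand side as precisely the definition~\eqref{M} of $M^q_p(r,\nu)$ (with the two terms in the symmetrized sum simply written in reversed order), which gives the claim.

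There is no real obstacle here: the lemma is an essentially formal corollary of the previous one, and the only conceptual content is the remark that the $\nu \mapsto -\nu$ symmetrization in~\eqref{M} was introduced exactly to convert the twisted symmetry of $N^p_q$ into an untwisted transposition symmetry for $M^p_q$. The proof will accordingly be two or three lines long.
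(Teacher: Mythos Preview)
Your proposal is correct and follows exactly the same route as the paper: apply $\eta^{-1}(\,\cdot\,)^T\eta$ to the defining symmetrization~\eqref{M}, invoke Lemma~\ref{symmetry_N} on each summand, and identify the result as $M^q_p(r,\nu)$. The paper's proof is the same two-line computation.
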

\begin{proof} We have
$$\eta^{-1} M^p_q(r,\nu)^T \eta =
\f12 \eta^{-1} \left(N^p_q(r,\nu)^T+N^p_q(r,-\nu)^T\right) \eta = \f12 \left( N^q_p(r,-\nu)+N^q_p(r,\nu) \right),$$
which proves the lemma.
\end{proof}

\section{The Virasoro-like algebra}\label{section3}

In this section we prove Theorem~\ref{viralikealgebra} and Proposition~\ref{virahalftwistthm}.

Let us first recall some notations. The {\it unsigned Stirling numbers of the first kind}
$\left[\begin{array}{c} n \\ k \\ \end{array} \right]$
are integers, defined via the generating function
\beq
(x)_n = \sum_k \left[\begin{array}{c} n \\ k \\ \end{array} \right] x^k, \quad n\geq 0, \label{stirling424}
\eeq
where $(x)_n:=x(x+1)\cdots(x+n-1)$ denotes the {\it increasing Pochhammer symbol}. The
{\it Stirling numbers of the second kind} $\left\{\begin{array}{c} n \\ k \\ \end{array} \right\}$ are integers,
that can be defined via the generating function
\beq\label{stirlinggen425}
x^n = \sum_k (-1)^{n-k} \left\{\begin{array}{c} n \\ k \\ \end{array}\right\} (x)_{k}, \quad n\geq0.
\eeq
Alternatively, $\left\{\begin{array}{c} n \\ k \\ \end{array} \right\}$ admit the following generating function:
\beq
\sum_{n} \left\{\begin{array}{c} n \\ k \\ \end{array}\right\} \frac{x^n}{n!} = \frac{(e^x-1)^k}{k!}.
\eeq
For more details about these numbers (for instance their combinatorial meanings) see~\cite{GKP}.

We now consider the commutation relation between the Virasoro-like operators.
\begin{lemma} For arbitrary $w\in\mathbb{Z}+\f12$, $m\geq-1$,
\begin{equation}\label{comm_a_L}
\left[a_{\alpha,w}, L_m(\nu)\right] = (-1)^{w-\f12} \sum_{r\geq 0} a_{\beta,m-r+w} \left(M^{m-r+w}_{-w}(r,\nu)\right)^\beta_\alpha .
\end{equation}
\end{lemma}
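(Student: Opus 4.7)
The plan is to compute $[a_{\alpha,w},L_m(\nu)]$ directly from expression~\eqref{L_m_R}, exploiting two simplifications. The constant piece $\tfrac{\delta_{m,0}}{4}\mathrm{tr}(\tfrac14-\mu^2)$ commutes with $a_{\alpha,w}$ and can be dropped. Moreover, since each elementary commutator $[a_{\alpha,w},a_{\beta,p}]=(-1)^{w-1/2}\eta_{\alpha\beta}\delta_{w+p,0}$ is a $c$-number, the normal-ordering symbols play no role here: $:a_{\sigma,p}a_{\gamma,q}:$ differs from $a_{\sigma,p}a_{\gamma,q}$ only by a $c$-number, which commutes with $a_{\alpha,w}$, so the Leibniz rule $[a_{\alpha,w},AB]=[a_{\alpha,w},A]B+A[a_{\alpha,w},B]$ applies without correction.

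I would then unpack $:a_pM^p_q(r,\nu)a^q:=\,:a_{\sigma,p}(M^p_q(r,\nu))^\sigma_\tau\eta^{\tau\gamma}a_{\gamma,q}:$ and apply Leibniz. The resulting commutator splits into two families, distinguished by which of the two $a$-factors pairs with $a_{\alpha,w}$. Terms with $p=-w$ (so $q=m+w-r$) contribute
\[
\tfrac{(-1)^{w-1/2}}{2}\sum_{r\ge 0}\eta_{\alpha\sigma}(M^{-w}_{m+w-r}(r,\nu))^\sigma_\tau\eta^{\tau\gamma}\,a_{\gamma,m+w-r},
\]
while terms with $q=-w$ (so $p=m+w-r$) contribute
\[
\tfrac{(-1)^{w-1/2}}{2}\sum_{r\ge 0}(M^{m+w-r}_{-w}(r,\nu))^\sigma_\alpha\,a_{\sigma,m+w-r},
\]
which is already exactly half of the claimed right-hand side.

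The decisive input is the transpose-symmetry identity~\eqref{etaMtransp413}, whose index form reads $\eta_{\alpha\sigma}(M^{-w}_{m+w-r}(r,\nu))^\sigma_\tau\eta^{\tau\gamma}=(M^{m+w-r}_{-w}(r,\nu))^\gamma_\alpha$. Substituting this into the first family turns it into a copy of the second, the two halves coalesce, the factor $1/2$ cancels, and~\eqref{comm_a_L} falls out. The only step requiring genuine care is keeping the upper/lower index positions straight through~\eqref{etaMtransp413}; everything else is bookkeeping, and the argument in fact works for every $m\in\mathbb{Z}$, not just for $m\ge -1$.
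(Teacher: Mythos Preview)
Your proposal is correct and follows essentially the same route as the paper: drop the scalar central term, remove the normal ordering (the difference being a $c$-number), apply the Leibniz rule to pick out the contributions $p=-w$ and $q=-w$, and then use the transpose identity~\eqref{etaMtransp413} to fold the first family into the second. The paper's proof is identical in structure, only with slightly different index labels.
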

\begin{proof}
We have
\begin{eqnarray}
&& [a_{\alpha,w},L_m(\nu)]\notag\\
&=& \f12 \Biggl[a_{\alpha,w}, \sum_{p,q\in\mathbb{Z}+\f12,r\geq 0 \atop p+q+r=m}
: a_{\beta,p} \left(M^p_q(r,\nu)\right)^\beta_\gamma \eta^{\gamma\epsilon} a_{\epsilon,q} : \Biggr]\notag\\
&=& \f12 \Biggl[a_{\alpha,w}, \sum_{p,q\in\mathbb{Z}+\f12,r\geq 0 \atop p+q+r=m}
a_{\beta,p} \left(M^p_q(r,\nu)\right)^\beta_\gamma \eta^{\gamma\epsilon} a_{\epsilon,q} \Biggr]\notag\\
&=&\f{(-1)^{w-\f12 }}{2} \Biggl(\sum_{q\in\mathbb{Z}+\f12 ,r\geq 0 \atop q+r=m+w }
\eta_{\alpha\beta} \left(M^{-w}_q(r,\nu)\right)^\beta_\gamma \, \eta^{\gamma\epsilon} a_{\epsilon,q} +
\sum_{p\in\mathbb{Z}+\f12 ,r\geq 0\atop p+r=m+w} \left(M^p_{-w}(r,\nu)\right)^\beta_\alpha a_{\beta,p} \Biggr)\notag\\
&=&\f{(-1)^{w-\f12 }}{2}\left(\sum_{r\geq 0} a_{\beta,m-r+w} \left(M^{m-r+w}_{-w}(r,\nu)\right)^\beta_\alpha
+ \sum_{r\geq 0} a_{\beta,m-r+w} \left(M^{m-r+w}_{-w}(r,\nu)\right)^\beta_\alpha \right). \nn
\end{eqnarray}
Here in the last equality we applied~\eqref{etaMtransp413} to the first sum.
\end{proof}
\begin{proposition}
The following formula holds true for arbitrary $m,n\geq -1$:
\begin{align}
& \left[L_{m}(\nu),L_{n}(\tilde{\nu})\right] \nn\\
&  =\f12 \sum_{p\in\mathbb{Z}+\f12 \atop r,s\geq 0}  :a_p \left( (-1)^{m-r-p-\f12 }  M^p_{m-r-p}(r,\nu) 
M^{p-m+r}_{m+n-r-s-p}(s,\tilde{\nu}) \right.\nn\\
& \left. \quad +  (-1)^{p-n+s-\f12 }  M^p_{n-s-p}(s,\tilde{\nu}) 
M^{p-n+s}_{m+n-r-s-p}(r,\nu)\right) a^{m+n-r-s-p}: \nn \\
& \quad + \f{1}2  \delta_{m,1}  \delta_{n,-1} \left({\rm tr}\left(\f{1}{4}-\mu^2\right)-l  \nu^2\right)
- \f{1}2  \delta_{m,-1}  \delta_{n,1} \left({\rm tr}\left(\f{1}{4}-\mu^2\right)-l  \tilde{\nu}^2\right). \label{comm_nu_nu'}
\end{align}
\end{proposition}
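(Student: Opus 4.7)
The strategy is to exploit that both $L_m(\nu)$ and $L_n(\tilde\nu)$ are normal-ordered quadratics in the free bosonic modes $a_{\alpha,p}$, so that their commutator decomposes into a normal-ordered quadratic piece (reproducing the symmetric double sum in the statement) plus c-number corrections arising from re-normal-ordering (the $\delta$-terms). Since the preceding lemma gives $[a_{\alpha,w}, L_m(\nu)]$ as a closed linear expression in the modes, the quadratic part can be read off by treating $[L_m(\nu),\cdot\,]$ as a derivation acting on the bilinear expression for $L_n(\tilde\nu)$.

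Concretely, first I would write
$$L_n(\tilde\nu) = \tfrac12 \sum_{p'+q'+s=n,\, s\ge 0} :a_{p'}\, M^{p'}_{q'}(s,\tilde\nu)\, a^{q'}:$$
and expand, applying Leibniz to the literal product inside the normal-ordering and then re-normal-ordering the result,
$$[L_m(\nu), L_n(\tilde\nu)] = \tfrac12 \sum :[L_m(\nu),a_{p'}]\, M^{p'}_{q'}(s,\tilde\nu)\, a^{q'}: + \tfrac12 \sum :a_{p'}\, M^{p'}_{q'}(s,\tilde\nu)\, [L_m(\nu),a^{q'}]: + (\text{c-number part}).$$
Substituting \eqref{comm_a_L} (with the overall sign from $[L_m,a]=-[a,L_m]$) for each inner commutator, transposing matrix factors via $\eta^{-1} M^p_q(r,\nu)^T \eta = M^q_p(r,\nu)$ from~\eqref{etaMtransp413}, and relabeling the summation indices (writing $w=-p'$ in one sum and $w=-q'$ in the other, with $p$ as the remaining free mode), the two Leibniz contributions combine into exactly the two terms inside the colon-brackets of~\eqref{comm_nu_nu'}. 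The signs $(-1)^{m-r-p-1/2}$ and $(-1)^{p-n+s-1/2}$ descend from the $(-1)^{w-1/2}$ in~\eqref{comm_a_L}, while the matrix products $M^p_{m-r-p}(r,\nu) M^{p-m+r}_{m+n-r-s-p}(s,\tilde\nu)$ arise from composing the matrix returned by the lemma with the matrix carried by $L_n(\tilde\nu)$.

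The main technical obstacle is the c-number part. These terms arise when, in re-normal-ordering $[L_m(\nu),a_{p'}]\cdot a^{q'}$ (and similarly for the other sum), one of the modes produced by $[L_m(\nu),a_{p'}]$ must be contracted against $a^{q'}$, producing a scalar proportional to $\eta^{-1}$; it is this second contraction (on top of the implicit first one inside the lemma) that generates an overall c-number. Because $[a_{\alpha,p},a_{\beta,q}]\propto\delta_{p+q,0}\,\eta_{\alpha\beta}$, the relevant index condition forces $m+n=0$, and a careful case analysis, together with ${\rm tr}\,\mu=0$ from~\eqref{p10411} and the decomposition~\eqref{PRdecomp430} for $R$, should confirm that only $(m,n)=(1,-1)$ and $(m,n)=(-1,1)$ contribute. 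For these two boundary cases one must finally evaluate a finite sum of traces of products $M^p_q(r,\nu)M^q_p(s,\tilde\nu)$ with $p,q\in\{\pm 1/2\}$; using the $\Gamma$-function identities built into~\eqref{N} and~\eqref{M} together with ${\rm tr}(\eta^{-1}\eta)=l$, I expect this trace to collapse to $\tfrac12\bigl({\rm tr}(\tfrac14-\mu^2)-l\nu^2\bigr)$. The ${\rm tr}(\tfrac14-\mu^2)$ piece is $\nu$-independent and is precisely what is demanded by consistency with the undeformed identity $[L_1,L_{-1}]=2L_0$ in the presence of the constant term of $L_0$; the $-l\nu^2$ correction is the genuinely new deformation contribution, and the opposite sign in the $(-1,1)$ companion term follows from the antisymmetry of the commutator.
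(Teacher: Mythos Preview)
Your strategy is the same as the paper's: expand one operator as a normal-ordered quadratic, apply the mode--operator lemma~\eqref{comm_a_L} via Leibniz, use~\eqref{etaMtransp413} to tidy the matrix factors, and then re-normal-order to separate the quadratic piece from the c-number. The quadratic part of your argument is fine and matches the paper line for line (the paper happens to expand $L_m(\nu)$ and commute with $L_n(\tilde\nu)$ rather than the reverse, but this is immaterial).

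The one place your reasoning is off is the c-number step. The contraction condition $[a_{\alpha,p},a_{\beta,q}]\propto\delta_{p+q,0}$, applied to the pair $(p,\,m+n-r-s-p)$, gives $r+s=m+n$, \emph{not} $m+n=0$. So a priori there are c-number contributions for every $(r,s)$ with $r,s\ge 0$ and $r+s=m+n$, and these involve traces of products $M^p_{m-r-p}(r,\nu)\,M^{p-m+r}_{-p}(m+n-r,\tilde\nu)$ summed over a finite range of~$p$. The further collapse to $r=s=0$ (and hence to $\delta_{m+n,0}$) is a genuine additional step: one uses~\eqref{p0924}--\eqref{p20411} to see that each such product, once the $\mu$-dependent scalar factors are commuted past the $(R^k)_r$ pieces, carries a total $\mu$-grading shift of $r+s$, and any matrix with nonzero $\mu$-shift is traceless. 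This is what the paper means by ``an argument similar to the one given in~[DZ2]''; your appeal to ${\rm tr}\,\mu=0$ is not the right mechanism. Once $r=s=0$ and $m+n=0$ are in place, only $(m,n)=(\pm1,\mp1)$ give a nonempty $p$-range, and the paper finishes exactly as you suggest by evaluating $M^{1/2}_{1/2}(0,\nu)=-(\nu^2+\mu^2-\tfrac14)I$ and $M^{-1/2}_{-1/2}(0,\nu)=I$.
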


\begin{proof} For $m,n\ge -1$, we have
\begin{eqnarray}
&& \left[L_{m}(\nu), L_{n}(\tilde{\nu})\right]\nn\\
&=&\f12  \Biggl[\sum_{p,q\in\mathbb{Z}+\f12,r\geq 0 \atop p+q+r=m}
: a_{\beta,p}  (M^p_q(r,\nu))^\beta_\gamma  \eta^{\gamma\epsilon}  a_{\epsilon,q}:\, , L_n(\tilde{\nu}) \Biggr]\notag\\
&=&\f12  \sum_{p,q\in\mathbb{Z}+\f12,r\geq 0 \atop p+q+r=m}
\left( \left[a_{\beta,p}, L_n(\tilde{\nu})\right] (M^p_q(r,\nu))^\beta_\gamma  \eta^{\gamma\epsilon}  a_{\epsilon,q}
+ a_{\beta,p}  (M^p_q(r,\nu))^\beta_\gamma  \eta^{\gamma\epsilon} \left[a_{\epsilon,q},L_n(\tilde{\nu})\right]\right)\notag\\
&=&\f12  \sum_{p,q\in\mathbb{Z}+\f12,r\geq 0 \atop p+q+r=m}
\left((-1)^{p-\f12 }\sum_{s\geq 0} (M^{n-s+p}_{-p}(s,\tilde{\nu}))^\alpha_\beta \, a_{\alpha,n-s+p}  (M^p_q(r,\nu))^\beta_\gamma  \eta^{\gamma\epsilon}  a_{\epsilon,q}\right.\notag\\
&&\left. + a_{\beta,p} \, (M^p_q(r,\nu))^\beta_\gamma  \eta^{\gamma\epsilon}  (-1)^{q-\f12}  \sum_{s\geq 0} (M^{n-s+q}_{-q}(s,\tilde{\nu}))^\alpha_\epsilon   a_{\alpha,n-s+q}\right)\nn\\
&=&\f12  \sum_{p\in\mathbb{Z}+\f12 \atop r,s\geq 0}
\left((-1)^{p-n+s-\f12 }  a_{\alpha,p} \, (M^{p}_{n-s-p}(s,\tilde{\nu}))^\alpha_\beta   (M^{p-n+s}_{m+n-r-s-p}(r,\nu))^\beta_\gamma  \eta^{\gamma\epsilon}  a_{\epsilon,m+n-r-s-p}\right.\notag\\
&& \left. + (-1)^{m-r-p-\f12 } \, a_{\beta,p}  (M^p_{m-r-p}(r,\nu))^\beta_\gamma \, (M_{m+n-r-s-p}^{p-m+r}(s,\tilde{\nu}))^\gamma_\epsilon \, \eta^{\alpha\epsilon} \, a_{\alpha,m+n-r-s-p}\right).\notag
\end{eqnarray}
Note that when $p<m+n-r-s-p$, we have
\[
a_{\alpha,p}  a_{\beta,m+n-r-s-p} = \; :  a_{\alpha,p}  a_{\beta,m+n-r-s-p}:,
\]
and when $p\geq m+n-r-s-p$, we have
\[
a_{\alpha,p}  a_{\beta,m+n-r-s-p} = \; :a_{\alpha,p}  a_{\beta,m+n-r-s-p}: +  (-1)^{p-\f12}  \eta_{\alpha\beta}  \delta_{m+n-r-s,0} ,
\]
so we find
\begin{eqnarray}
&& \left[L_m(\nu),L_n(\tilde{\nu})\right]\nn\\
&=&
\f12  \sum_{p\in\mathbb{Z}+\f12 \atop r,s\geq 0}  : a_p \left((-1)^{m-r-p-\f12 }  M^p_{m-r-p}(r,\nu) 
M^{p-m+r}_{m+n-r-s-p}(s,\tilde{\nu}) \right.\notag\\
&& \left. +  (-1)^{p-n+s-\f12} \, M^p_{n-s-p}(s,\tilde{\nu}) 
M^{p-n+s}_{m+n-r-s-p}(r,\nu)\right)  a^{m+n-r-s-p}  : \notag\\
&& + \f{1}2 \sum_{0\leq r<m}(-1)^{m-r-1}  \sum_{\f12 \leq p\leq m-r-\f12}   {\rm tr} \left(M^p_{m-r-p}(r,\nu) M^{p-m+r}_{-p}(m+n-r,\tilde{\nu})\right) \notag\\
&& +  \f{1}2  \sum_{m< r\leq m+n}(-1)^{m-r}  \sum_{\f12 \leq p \leq r-m-\f12}  {\rm tr} \left(M^{p+m-r}_{-p}(r,\nu)  M^p_{-p-m+r}(m+n-r,\tilde{\nu})\right).\notag
\end{eqnarray}
Using an argument similar to the one given in~\cite{DZ2}, 
namely by using~\eqref{p0924} and~\eqref{p20411}, we can calculate the last two terms more explicitly and obtain
\begin{align}
\left[L_m(\nu),L_n(\tilde{\nu})\right]=& \, \f12 \sum_{p\in\mathbb{Z}+\f12 \atop r,s\geq 0}
:a_p \left((-1)^{m-r-p-\f12}  M^p_{m-r-p}(r,\nu)  M^{p-m+r}_{m+n-r-s-p}(s,\tilde{\nu})\right. \notag\\
&\left.+ (-1)^{p-n+s-\f12}  M^p_{n-s-p}(s,\tilde{\nu})  M^{p-n+s}_{m+n-r-s-p}(r,\nu)\right)a^{m+n-r-s-p}: \notag\\
&+ \f{1}2  \delta_{m+n,0}  (-1)^{m-1}  \sum_{\f12 \leq p\leq m-\f12} {\rm tr} \left(M^p_{m-p}(0,\nu) M^{p-m}_{-p}(0,\tilde{\nu})\right) \notag\\
&+ \f{1}2  \delta_{m+n,0}  (-1)^{m}  \sum_{\f12 \leq p \leq -m-\f12} {\rm tr} \left(M^{p+m}_{-p}(0,\nu) M^p_{-p-m}(0,\tilde{\nu})\right) \notag\\
= & \, \f12  \sum_{p\in\mathbb{Z}+\f12 \atop r,s\geq 0} :  a_p \left((-1)^{m-r-p-\f12} M^p_{m-r-p}(r,\nu) M^{p-m+r}_{m+n-r-s-p}(s,\tilde{\nu}) \right. \notag\\
& \left. + (-1)^{p-n+s-\f12} M^p_{n-s-p}(s,\tilde{\nu}) M^{p-n+s}_{m+n-r-s-p}(r,\nu)\right)a^{m+n-r-s-p}: \notag\\
&+ \f{1}2  \delta_{m,1}  \delta_{n,-1}  {\rm tr} \left(M^{\f12}_{\f12}(0,\nu)  M^{-\f12}_{-\f12}(0,\tilde{\nu})\right) \notag\\
&- \f{1}2  \delta_{m,-1}  \delta_{n,1}  {\rm tr} \left(M^{-\f12}_{-\f12}(0,\nu)  M^{\f12}_{\f12}(0,\tilde{\nu})\right). \notag
\end{align}
Here the second equality is due to a direct computation.
The proposition then follows from the simple fact that
\begin{equation}
M^{\f12 }_{\f12}(0,\nu) = -\left(\nu^2+\mu^2-\f{1}{4}\right)  I, \quad M^{-\f12 }_{-\f12}(0,\nu) = I.
\end{equation}
\end{proof}

Let us proceed to prove Theorem~\ref{viralikealgebra}.
The next proposition is important.

\begin{proposition}\label{univeral_comm_n=1,l=0,2}
The following equalities hold true for arbitrary $m\geq -1$, $k\geq 0$:
\begin{align}
& \left[L_{m,2k},L_{1,0}\right] =
(m-1)  L_{m+1,2k} +\f{2}{(m+2)(2k-1)}  \sum_{h\ge k} 
 (2h-4k+3) 
\binom{2h}{2k-2}  L_{m+1,2h}, \label{comm_m_n=1,l=0} \\
& [L_{m,2k},L_{1,2}] = - \frac2{m+2}  \sum_{h\ge k+1}  \binom{ 2h}{2k}  L_{m+1,2h}
+ \delta_{m,-1}  \delta_{k,0}  \frac{l}2 . \label{comm_m_n=1,l=2}
\end{align}
\end{proposition}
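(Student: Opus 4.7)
The plan is to derive both identities from the master commutator formula \eqref{comm_nu_nu'} specialized to $n=1$, by extracting first powers of $\tilde\nu$ and then powers of $\nu$. By \eqref{evenpoly}, $L_1(\tilde\nu) = L_{1,0} + \tilde\nu^2 L_{1,2}$ is a polynomial of degree~$2$ in~$\tilde\nu$, so $[L_m(\nu),L_{1,0}]$ is the $\tilde\nu^0$-coefficient and $[L_m(\nu),L_{1,2}]$ is the $\tilde\nu^2$-coefficient of $[L_m(\nu),L_1(\tilde\nu)]$; taking the $\nu^{2k}$-coefficient on both sides then yields \eqref{comm_m_n=1,l=0} and \eqref{comm_m_n=1,l=2}.

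The essential simplification specific to $n=1$ is that every matrix $M^{p'}_{q'}(s,\tilde\nu)$ appearing through the $L_1(\tilde\nu)$-side of \eqref{comm_nu_nu'} satisfies $p'+q'+s=1$, and is consequently a polynomial of low degree in~$\tilde\nu$. Writing $x = \mu + \tilde\nu + p' + s$, so that $-\mu-\tilde\nu+q'+1 = 2-x$, one has $\Gamma(x+1)\Gamma(2-x) = x(1-x)\pi/\sin\pi x$, and since $p'+s \in \mathbb{Z}+\tfrac12$, $\cos\pi(\mu+\tilde\nu) = (-1)^{p'+s-1/2}\sin\pi x$; the bracketed expression in \eqref{N} thus collapses to $(-1)^{p'+s-1/2}\pi x(1-x)$, giving
\[
N^{p'}_{1-p'-s}(s,\tilde\nu) = (-1)^{p'+s-1/2}\bigl(e^{R\partial_{\tilde\nu}}\bigr)_s\bigl[x(1-x)\bigr],
\]
which is at most quadratic in~$\tilde\nu$; only $(R^0)_s=\delta_{s,0}I$, $(R)_s$ and $(R^2)_s$ contribute, through $f$, $f'$ and $f''/2$ respectively. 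Symmetrizing to $M^{p'}_{q'}$ kills the odd-in-$\tilde\nu$ part and gives explicit closed formulas to substitute into the bilinear in \eqref{comm_nu_nu'}. The central traces there evaluate via $M^{1/2}_{1/2}(0,\tilde\nu) = -(\tilde\nu^2+\mu^2-1/4)\,I$ and $M^{-1/2}_{-1/2}(0,\tilde\nu) = I$, producing the $\delta_{m,-1}\delta_{k,0}(l/2)$ anomaly in \eqref{comm_m_n=1,l=2} and no anomalous contribution to \eqref{comm_m_n=1,l=0}.

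The remaining step is to reassemble the resulting normal-ordered bilinears in oscillator modes as linear combinations of $L_{m+1,2h}$ via the expansion \eqref{L_m_R} at level $m+1$, and then to match coefficients after extracting $\nu^{2k}$. The main obstacle is verifying the precise combinatorial factors $\tfrac{2(2h-4k+3)}{(m+2)(2k-1)}\binom{2h}{2k-2}$ and $-\tfrac{2}{m+2}\binom{2h}{2k}$. The telltale factor $1/(m+2)$ suggests passing to the generating-function picture: compute $[T(\lambda;\nu),L_1(\tilde\nu)]$ directly as a function of~$\lambda$, use $T(\lambda;\nu)=\sum_m L_m(\nu)/\lambda^{m+2}$ so that the $1/(m+2)$ arises naturally from residues at $\lambda=\infty$, and then read off the coefficients in~$\nu$ and~$\tilde\nu$. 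The binomial coefficients then come from Taylor expansion of the $M$-entries (polynomial in $x=\mu+\tilde\nu+p+s$) in $\tilde\nu$ and~$\nu$, reducing the identification to a finite binomial identity that can be checked by comparing low-degree coefficients or by induction on~$k$.
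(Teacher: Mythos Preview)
Your setup is exactly the paper's: specialize \eqref{comm_nu_nu'} to $n=1$, use that the $L_1(\tilde\nu)$-side $M$-matrices are quadratic in~$\tilde\nu$ (your Gamma-function collapse is the right computation), extract the $\tilde\nu^0$- and $\tilde\nu^2$-coefficients, and compare with \eqref{L_m_R} at level~$m+1$. The anomaly bookkeeping is also correct.

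The gap is in your last paragraph. The identification is \emph{not} a finite binomial identity checkable by low-degree comparison or induction on~$k$: after expanding via the Stirling numbers (Lemma~\ref{coefficients}), one obtains for each fixed~$k$ an infinite family of polynomial identities indexed by~$m$, and the paper resolves them by summing over~$m$ with weight~$y^{m+1}$ into the closed-form generating function
\[
A_{k,x}(y)=\sum_{m\ge -1} a_{m,k}(x)\,y^{m+1}=\frac{1}{k!}\,(1+y)^x\,\log^k(1+y),
\]
after which both sides of \eqref{comm_m_n=1,l=0} and \eqref{comm_m_n=1,l=2} become explicit expressions in $(1+y)^x$ and $\log(1+y)$ that one can compare directly. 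Your proposed route via $[T(\lambda;\nu),L_1(\tilde\nu)]$ is formally the same generating-function idea (with $y\leftrightarrow 1/\lambda$ up to normalization), but you still have to produce the closed form on both sides; the $1/(m+2)$ alone does not close the argument, and no induction on~$k$ is available since the right-hand sides mix all $h\ge k$.

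A second point you do not address: for general~$R$, the $L_m(\nu)$-side $M$-matrices carry $(e^{R\partial_\nu})_r$ to all orders, not just quadratic, and the paper handles this by first proving the $R=0$ case and then upgrading via an $R$-deformed generating function $A_{x,k,t}(y)=\frac{1}{k!}(1+y)^x\log^k(1+y)\bigl((1+y)^R\bigr)_t$, checking that the additional $R$-terms match separately. Your sketch treats only the $R$-contributions coming from the $L_1(\tilde\nu)$-factor.
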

\begin{proof}
We first prove the proposition for the case $R=0$. In this case,
for convenience we omit the argument~$r$ in $N^p_q(r,\nu)$ and $M^p_q(r,\nu)$,
and write them as $N^p_q(\nu)$ and $M^p_q(\nu)$ respectively.
The explicit expressions of these matrix-valued functions are given by
\begin{align}
N^p_{m-p}(\nu) 
& = (-1)^{p-m-\f12}  (m+1)!  \binom{\nu+\mu+p}{m+1} \nn
\end{align}
and
\[
M^p_{m-p}(\nu) = (-1)^{p-m-\f12} \, \f{(m+1)!}{2}  \left(
\binom{\nu+\mu+p}{m+1}
+\binom{-\nu+\mu+p}{m+1}
\right).
\]
Here, $p\in\mathbb{Z}+\f12$, $m\geq -1$.
Substituting these expressions into equations~\eqref{comm_nu_nu'}, we find that
\begin{align}
& \left[L_m(\nu),  L_n(\tilde{\nu})\right] = -  \f{(m+1)!  (n+1)!}{8} 
\sum_{p\in\mathbb{Z}+\f12}(-1)^{p-m-n-\f12} \nn\\
& :  a_p  \left\{\left(\binom{\nu+\mu+p}{m+1}
+\binom{-\nu+\mu+p}{m+1} \right) \left(\binom{\tilde{\nu}+\mu+p-m}{n+1}
+\binom{-\tilde{\nu}+\mu+p-m}{ n+1} \right) \right. \nn\\
& \left. - \left(\binom{\tilde{\nu}+\mu+p}{ n+1} + \binom{-\tilde{\nu}+\mu+p}{n+1} \right)
\left(\binom{\nu+\mu+p-n}{ m+1} + \binom{-\nu+\mu+p-n}{m+1} \right)\right\} \, a^{m+n-p}  : \nn \\
& +  \f{1}2  \delta_{m,1}  \delta_{n,-1}  \left(-l \nu^2 + {\rm tr}\left(\f{1}{4}-\mu^2\right)\right)
- \f{1}2 \delta_{m,-1} \delta_{n,1} \left(-l  \tilde{\nu}^2 + {\rm tr}\left(\f{1}{4}-\mu^2\right)\right), \label{comm_nu_nu'_explicit}
\end{align}
where $m,n\geq-1$.
On the other hand, for $m+n\geq-1$ we have
\begin{align}
&L_{m+n}(\nu) \nn\\
& \quad = \f{(m+n+1)!}{4}  \sum_{p\in\mathbb{Z}+\f12 }  (-1)^{p-m-n-\f12} 
:  a_p  \left(\binom{\nu+\mu+p}{m+n+1}+\binom{-\nu+\mu+p}{m+n+1} \right)  a^{m+n-p}  : \nn\\
& \qquad +  \f{\delta_{m+n,0}}{4} \, {\rm tr} \left(\f{1}{4}-\mu^2\right) . \label{L_m+n} 
\end{align}

\begin{lemma}\label{coefficients} For every $m\geq -1$, the following elementary identity holds true:
\begin{equation}\label{elementarybinompoly}
\binom{\nu+x}{ m+1} = \sum_{a=0}^{m+1}\sum_{b=a}^{m+1} \f{(-1)^{b-a}}{b!}\left[\begin{array}{c}b\\ a\\ \end{array}\right] \binom{x}{m+1-b} \nu^a.
\end{equation}
Here, $x$ is an indeterminate, and we recall that
$\left[\begin{array}{cc}b\\a\end{array}\right]$ denotes the unsigned Stirling number of the first kind (see~\eqref{stirling424}).
\end{lemma}
\begin{proof}
Follows from the definition of unsigned Stirling numbers of the first kind.
\end{proof}

Taking $n=1$ in the expressions~\eqref{comm_nu_nu'_explicit} and~\eqref{L_m+n}, and employing Lemma~\ref{coefficients} to compare
the resulting expressions, we find that in order to show the identities~\eqref{comm_m_n=1,l=0}
 and~\eqref{comm_m_n=1,l=2}, it is equivalent to show that the following two elementary identities are valid for 
arbitrary $m\geq -1$, $k\geq 0$:
\begin{align}
& - 2!  (m+1)! \sum_{k_1=2k}^{m+1} 
\f{(-1)^{k_1}}{k_1!} \left[\begin{array}{cc} k_1\\2k\end{array}\right]
\left(\binom{x}{m+1-k_1}  \binom{x-m}{2} - \binom{x}{2}  \binom{x-1}{m+1-k_1}\right)     \nn\\
& \qquad =  (m+2)!  (m-1)  \sum_{b=2k}^{m+2}  \f{(-1)^b}{b!} \left[\begin{array}{cc} b\\2k\end{array}\right] \binom{x}{m+2-b}\nn\\
& \qquad \quad +  2  \f{(m+1)!}{2k-1}  \sum_{h=k}^{[\f{m+2}{2}]}  (2h-4k+3)  \binom{2h}{2k-2} 
\sum_{b=2h}^{m+2}  \f{(-1)^b}{b!} \left[\begin{array}{cc} b\\2h\end{array}\right]
\binom{x}{m+2-b}, \label{id1}\\
& - \frac{2!  (m+1)!}2  \sum_{k_1=2k}^{m+1} 
\f{(-1)^{k_1}}{k_1!} \left[\begin{array}{c} k_1\\2k\end{array}\right]
\left( \binom{x}{m+1-k_1} - \binom{x-1}{m+1-k_1}\right)\notag\\
&\qquad =  -  2  \f{(m+2)!}{(m+2)} \, \sum_{h=k+1}^{[\f{m+2}{2}]}  \binom{2h}{2k} 
\sum_{b=2h}^{m+2}  \f{(-1)^b}{b!}\left[\begin{array}{cc} b\\2h\end{array}\right]
\binom{x}{m+2-b}. \label{id2}
\end{align}
Let us prove these two identities.
Denote by~$a_{m,k}(x)$ the following polynomial in~$x$:
\begin{equation}\label{defamk425}
a_{m,k}(x) = \sum_{k_1=k}^{m+1}  \f{(-1)^{k_1-k}}{k_1!}
\left[\begin{array}{c} k_1\\ k \end{array} \right]
\binom{x}{ m+1-k_1}.
\end{equation}
Introduce the generating function
\beq\label{generatingaA423425}
A_{k,x}(y) := \sum_{m=-1}^\infty a_{m,k}(x)  y^{m+1} = \f{1}{k!}  (1+y)^x  \log^{k}(1+y).
\eeq
In terms of this generating function, the identities~\eqref{id1} and~\eqref{id2} are equivalent to
\begin{align}\label{generating_id1_R0}
&-  (x^2+x)  A_{2k,x}(y) - y^2  A_{2k,x}''(y) + 2  x  y  A_{2k,x}'(y)+(x^2-x)  A_{2k,x-1}(y)\notag\\
&=y A_{2k,x}''(y) - 2  A_{2k,x}'(y) + \f{2}{2k-1}  \sum_{h\geq k}  (2h-4k+3)  \binom{2h}{2k-2}  \f{A_{2h,x}(y)}{y},
\end{align}
and
\begin{equation}\label{generating_id2_R0}
-\left(A_{2k,x}(y)-A_{2k,x-1}(y)\right) = -2\sum_{h\geq k+1} \binom{2h}{2k} \f{A_{2h,x}(y)}{y},
\end{equation}
respectively.
Note that both sides of~\eqref{generating_id1_R0} are equal to
\begin{align}
& \f{(1+ y)^{x-2}}{(2k)!}   \log^{2k-2}(1+ y) \nn\\
& \times  \left(2  (1 - 2 k)  k  y^2 + 2  k  y  (2 x + y)  \log(1+ y) +
   x  (xy-3y-2)  \log^2(1+ y)\right),\nn
\end{align}
and that both sides of~\eqref{generating_id2_R0} are equal to
\[ -  \f{y}{(2k)!}  (1+y)^{x-1}  \log^{2k}(1+y), \]
thus we complete the proof of the identities~\eqref{id1} and~\eqref{id2}, and the identities~\eqref{comm_m_n=1,l=0} and~\eqref{comm_m_n=1,l=2}
hold true for $R=0$.

We continue to consider the general case.
As a generalization of Lemma~\ref{coefficients}, we have the following lemma.
\begin{lemma}\label{coefficients_R} For arbitrary $m\geq -1,r\geq 0$ we have
\begin{equation}
\left(e^{R\pa_\nu}\right)_r  \left(\binom{\nu+x+r}{m+1}\right)
= \sum_{k=0}^{m+1}\sum_{a=k}^{m+1} 
\binom{a}{k}  \sum_{b=a}^{m+1}  \f{(-1)^{b-a}}{b!}  \left[\begin{array}{c}b\\ a\\ \end{array}\right] \binom{x}{m+1-b}  \nu^k  \left(R^{a-k}\right)_r.
\end{equation}
\end{lemma}
The proof is similar to that of Lemma~\ref{coefficients}, so we omit the details.

Using formulas~\eqref{comm_nu_nu'}, \eqref{M} and~\eqref{N}, we find that, for $m,n\geq -1$,
\begin{align}
&\left[L_m(\nu),L_n(\tilde{\nu})\right]\nn\\
=& - \frac{(m+1)!  (n+1)!}{2} \sum_{p\in\mathbb{Z}+\f12}  \sum_{t\geq 0} \sum_{r+s=t}  (-1)^{p+m+n+t-\f12 }\notag\\
&  :  a_p \left\{\overline{\left(e^{R\pa_\nu}\right)_r  \binom{\nu+\mu+p+r}{m+1}}
 \overline{\left(e^{R\pa_{\tilde{\nu}}}\right)_s  \binom{\tilde{\nu}+\mu+p-m+r+s}{n+1}}\right.\notag\\
& \left.\quad\quad  -  \overline{\left[e^{R\pa_{\tilde{\nu}}}\right]_s  \binom{\tilde{\nu}+\mu+p+s}{n+1}} 
\overline{\left(e^{R\pa_\nu}\right)_r  \binom{\nu+\mu+p-n+r+s}{m+1}}\right\}  a^{m+n+t-p}  :\notag\\
&  +  \f{1}2  \delta_{m,1}  \delta_{n,-1}  \left(- l \nu^2+ {\rm tr} \left(\f{1}{4}-\mu^2\right)\right)
-  \f{1}2  \delta_{m,-1} \delta_{n,1}  \left(-l \tilde{\nu}^2+ {\rm tr} \left(\f{1}{4}-\mu^2\right)\right). \label{comm_nu_nu'_explicit_R}
\end{align}
Here, the long bars denote taking the even degree terms of~$\nu$ and of~$\tilde{\nu}$.

On the other hand, from~\eqref{Virasoro-like_op_with_nu} and~\eqref{N} we find that, for $m+n\geq-1$,
\begin{align}\label{L_m+n_R421}
& L_{m+n}(\nu) =  \frac{(m+n+1)!}{2}  \sum_{p\in\mathbb{Z}+\frac{1}{2}} \sum_{t\geq 0}  (-1)^{p+m+n+t-\frac{1}{2}} 
:  a_p  \overline{\left(e^{R\pa_\nu}\right)_{t} \binom{\nu+\mu+p+t}{m+n+1}}  a^{m+n+t-p}  : \nn\\
& \qquad\qquad\quad  +  \frac{\delta_{m+n,0}}{4}  {\rm tr} \left(\frac{1}{4}-\mu^2\right).
\end{align}

By looking at the degree $(2k,0)$-term of $\nu,\tilde{\nu}$ in~\eqref{comm_nu_nu'_explicit_R} with $n=1$
as well as the degree $2h$-terms, $h\geq k$, in~\eqref{L_m+n_R421} with $n=1$, and by using Lemma~\ref{coefficients_R},
we obtain the following equivalent form of the identity~\eqref{comm_m_n=1,l=0}: for arbitrary $m\geq -1,t\geq 0$,
\begin{align}\label{id1_R}
& - 2! (m+1)! \sum_{r+s=t} \Bigg( \sum_{k_1=2k}^{m+1} \binom{k_1}{2k} \sum_{k_2\geq k_1}
\f{(-1)^{k_2-k_1}}{k_2!}\left[\begin{array}{cc} k_2\\k_1\end{array}\right]
\left(\begin{array}{cc} x\\m+1-k_2\end{array}\right) \left(R^{k_1-2k}\right)_r \notag\\
&\quad\quad\quad\quad\quad\quad\times\sum_{l_1=0}^2 \sum_{l_2\geq l_1} \f{(-1)^{l_2-l_1}}{l_2!}\left[\begin{array}{cc} l_2\\l_1\end{array}\right]
\left(\begin{array}{cc} x-m+r\\2-l_2\end{array}\right) \left(R^{l_1}\right)_s \notag\\
&\quad\quad\quad\quad\quad\quad-\sum_{l_1=0}^2 \sum_{l_2\geq l_1} \f{(-1)^{l_2-l_1}}{l_2!}\left[\begin{array}{cc} l_2\\l_1\end{array}\right]
\left(\begin{array}{cc} x\\2-l_2\end{array}\right) \left(R^{l_1}\right)_s\notag\\
&\quad\quad\quad\quad\quad\quad\times \sum_{k_1=2k}^{m+1} \binom{ k_1 }{2k} \sum_{k_2\geq k_1} \f{(-1)^{k_2-k_1}}{k_2!}\left[\begin{array}{cc} k_2\\k_1\end{array}\right]
\left(\begin{array}{cc} x-1+s\\m+1-k_2\end{array}\right) \left(R^{k_1-2k}\right)_r\Bigg)\notag\\
= & (m+2)! (m-1) \sum_{k_1=2k}^{m+2} \binom{k_1}{2k}
\sum_{k_2\geq k_1}
\f{(-1)^{k_2-k_1}}{k_2!}\left[\begin{array}{cc} k_2\\k_1\end{array}\right]
\left(\begin{array}{cc} x\\m+2-k_2\end{array}\right) \left(R^{k_1-2k}\right)_t\notag\\
& + \f{2(m+2)!}{(m+2)(2k-1)} \sum_{h=k}^{[\f{m+2}{2}]} (2h-4k+3)
\binom{2h}{2k-2}
\sum_{k_1=2h}^{m+2} \binom{k_1}{2h}\notag\\
&\quad\quad\quad\quad\quad\quad\quad\quad\quad\quad\times\sum_{k_2\geq k_1}\f{(-1)^{k_2}}{k_2!}
\left[\begin{array}{cc} k_2\\k_1\end{array}\right]
\binom{x}{m+2-k_2} \left(R^{k_1-2h}\right)_t.
\end{align}

Introduce the generating function
\begin{equation}
A_{x,k,t}(y) := \sum_{m=-1}^\infty \, \sum_{k_1=k}^{m+1} \binom{k_1}{k} \sum_{k_2\geq k_1}
\f{(-1)^{k_2-k_1}}{k_2!} \left[\begin{array}{cc} k_2\\k_1\end{array}\right]
\binom{x}{m+1-k_2} \left(R^{k_1-k}\right)_t y^{m+1}.
\end{equation}
We claim that for $t,k\geq 0$, $A_{x,k,t}(y)$ has the expression
\begin{equation}\label{generating function with R}
A_{x,k,t}(y) = \f{(1+y)^x}{k!} \log^{k}(1+y) \left((1+y)^R\right)_t.
\end{equation}
Indeed,
\begin{align}
{\rm RHS~of~}\eqref{generating function with R} & = \f{(1+y)^x}{k!} \sum_{\beta=0}^\infty
\f{1}{\beta!} \log^{k+\beta}(1+y) \left(R^\beta\right)_t \nn\\
&= \f{1}{k!} \sum_{\alpha=0}^\infty \binom{x}{\alpha} y^\alpha 
\sum_{\beta=0}^\infty \f{(k+\beta)!}{\beta!} \sum_{\tilde k=k+\beta}^\infty
\f{(-1)^{\tilde k-k-\beta}}{\tilde k!} \left[\begin{array}{c} \tilde k\\ k+\beta \end{array}\right] y^{\tilde k} \left(R^\beta\right)_t \nn\\
&= \sum_{m=-1}^\infty \sum_{\beta=0}^\infty \sum_{\tilde k=k+\beta}^\infty \f{(k+\beta)!}{k! \beta!} \binom{x}{m+1-\tilde k}
 \f{(-1)^{\tilde k-k-\beta}}{\tilde k!} \left[\begin{array}{c} \tilde k\\ k+\beta \end{array}\right] y^{m+1} \left(R^\beta\right)_t\notag\\
&= \sum_{m=-1}^\infty\sum_{k_1=k}^\infty\sum_{\tilde k=k_1}^\infty \binom{k_1}{k} \binom{ x }{m+1-\tilde k} \,
\f{(-1)^{\tilde k-k_1}}{\tilde k!} \left[\begin{array}{c} \tilde k \\ k_1 \end{array}\right] y^{m+1} \left(R^{k_1-k}\right)_t \notag\\
&= {\rm LHS~of~}\eqref{generating function with R}.
\end{align}
In terms of this generating function,
the identity~\eqref{id1_R} can be represented as the following identity for arbitrary $t, k\geq 0$:
\begin{align}\label{generating_id1_R_equiva}
&\quad - y^2 A_{x,2k,t}''(y) + 2 x y A_{x,2k,t}'(y) + 2 y \sum_{r+s=t} A_{x,2k,r}'(y) R_s\notag\\
&\quad - x (x+1) A_{x,2k,t}(y)-(2x+1) \sum_{r+s=t} A_{x,2k,r}(y) R_s - \sum_{r+s=t} A_{x,2k,r}(y) \left(R^2\right)_s\notag\\
&\quad + x (x-1) A_{x,2k,t}(y)+(2x-1) \sum_{r+s=t} A_{x-1,2k,r}(y) R_s + \sum_{r+s=t} A_{x-1,2k,r}(y) \left(R^2\right)_s\notag\\
&= y A_{x,2k,t}''(y) - 2 A_{x,2k,t}'(y) + \f{2}{2k-1} \sum_{h\geq k} (2h-4k+3) \binom{2h}{2k-2} \f{A_{x,2h,t}(y)}{y}.
\end{align}

To prove the identity~\eqref{generating_id1_R_equiva} we define
\begin{align}
&w_1 := \f{1}{(2k)!} (1+y)^{x-1} \log^{2k}(1+y) \left(R (1+y)^R\right)_t,\\
&w_2 := \f{1}{(2k)!} (1+y)^{x-2} \log^{2k}(1+y) \left(R^2 (1+y)^R\right)_t.
\end{align}
Then
\begin{align}
&A_{x,2k,t}'(y) = \left(\f{x}{1+y}+\f{2k}{(1+y)\log(1+y)}\right) \, A_{x,2k,t}(y) + w_1,\\
&A_{x,2k,t}''(y) = B(x,2k,t,y) A_{x,2k,t}(y) + \left(\f{2x-1}{1+y}+\f{4k}{(1+y)\log(1+y)}\right)w_1 + w_2
\end{align}
for some function $B(x,2k,t,y)$. Noticing that
\begin{align}
&\sum_{r+s=t}A_{x,2k,r}(y) R_s = (1+y) w_1,\notag\\
&\sum_{r+s=t}A_{x,2k,r}'(y) R_s = x w_1 + \f{2k}{\log(1+y)} w_1 + (1+y) w_2,\notag\\
&\sum_{r+s=t}A_{x,2k,r}(y) \left(R^2\right)_s = (1+y)^2 w_2, \notag
\end{align}
and comparing~\eqref{generating_id1_R_equiva} with the already proved identity~\eqref{generating_id1_R0}, we find that it suffices to show
\begin{align}\label{eq1_R}
&\quad - y^2 \left(\f{2x-1}{1+y}w_1+\f{4k}{(1+y)\log(1+y)}w_1+w_2\right) + 2 x y w_1\notag\\
&\quad + 2 \, y \left(x w_1+\f{2k}{\log(1+y)} w_1+(1+y) w_2\right) - (2x+1) (1+y) w_1 - (1+y)^2 w_2\notag\\
&\quad + (2x-1) w_1 + (1+y) w_2\notag\\
&= \f{y (2x-1)}{1+y} w_1 + \f{4 k y}{(1+y)\log(1+y)} w_1 + y w_2.
\end{align}
The validity of this equality can be verified easily. Hence
we complete the proof of the identity~\eqref{comm_m_n=1,l=0}.

Similarly, by looking at the degree $(2k,2)$-term of $\nu,\tilde{\nu}$ in~\eqref{comm_nu_nu'_explicit_R} with $n=1$ and the degree $2h$-terms,
$h\geq k$, in~\eqref{L_m+n_R421} with $n=1$, and by using Lemma~\ref{coefficients_R}, we obtain the following equivalent form
of the identity~\eqref{comm_m_n=1,l=2}: for arbitrary $m\geq -1,t\geq 0$,
\begin{align}\label{id2_R}
& - \, \frac{2! (m+1)!}2 \left\{ \sum_{k_1=2k}^{m+1} \binom{k_1}{2k} \sum_{k_2\geq k_1}
\f{(-1)^{k_2-k_1}}{k_2!} \left[\begin{array}{cc} k_2\\k_1\end{array}\right]
\binom{ x }{m+1-k_2} \left(R^{k_1-2k}\right)_t  \right.\notag\\
& \left. \quad\quad\quad\quad\quad - \sum_{k_1=2k}^{m+1} 
\binom{k_1}{2k} \sum_{k_2\geq k_1} \f{(-1)^{k_2-k_1}}{k_2!}\left[\begin{array}{cc} k_2\\k_1\end{array}\right] \,
\binom{x-1+s}{m+1-k_2} \left(R^{k_1-2k}\right)_t \right\}\notag\\
= & \frac{-2(m+2)!}{m+2} \sum_{h\geq k+1} \,
\binom{2h}{2k} \sum_{k_1\geq2h} \binom{k_1}{2h} \sum_{k_2\geq k_1} \f{(-1)^{k_2}}{k_2!} \left[\begin{array}{cc} k_2\\k_1\end{array}\right]
\binom{x}{m+2-k_2} \left(R^{k_1-2h}\right)_t.
\end{align}
In terms of the generating function $A_{x,k,t}(y)$, the identity~\eqref{id2_R} is equivalent to
\begin{equation}\label{generating_id2_R}
\frac{1}{2} \left(A_{x,2k,t}(y)-A_{x-1,2k,t}(y) \right) = \sum_{h\geq k+1} \binom{2h}{2k} \f{A_{x,2h,t}(y)}{y}.
\end{equation}
Comparing~\eqref{generating_id2_R} with the already proved identity~\eqref{generating_id2_R0},
one can obtain the validity of~\eqref{generating_id2_R}. This proves the identity~\eqref{comm_m_n=1,l=2}.
The proposition is proved.
\end{proof}

We note that Lemma~\ref{linearindep923} was already implicitly given in the proof of Proposition~\ref{univeral_comm_n=1,l=0,2}. Let us make it 
more explicitly.

\begin{proof}[Proof of Lemma~\ref{linearindep923}]
By using the expression~\eqref{L_m+n_R421} and Lemma~\ref{coefficients_R}.
\end{proof}

We are ready to prove Theorem~\ref{viralikealgebra}.

\begin{proof}[Proof of Theorem~\ref{viralikealgebra}]
For simplification of notations, define $L_{m,2k}=0$ if $k> [(m+1)/2]$. We are to show that
$[L_{m,2k}, L_{n,2\ell}] \in {\rm Vira}_{\rm like}$ for all $m,n\geq-1$.

Firstly, from~\eqref{Lmn0407} we know that $[L_{m,0},L_{n,0}]=(m-n) L_{m+n,0} \in {\rm Vira}_{\rm like}$, $\forall \, m,n\geq-1$. (One can also verify this using~\eqref{comm_nu_nu'}.)

Secondly, let us prove by induction
that $[L_{m,2k},L_{n,2\ell}] \in {\rm Vira}_{\rm like}$ for any fixed $m\geq1$ and for all $n\geq 1$, $k\geq1$, $\ell\geq 0$.
From the identity~\eqref{comm_m_n=1,l=0} we know that this is true for $m = 1$.
Suppose that $[L_{m,2k},L_{n,2\ell}] \in {\rm Vira}_{\rm like}$ 
for $1\leq m \leq M$~$(M\geq1)$ and arbitrary $n\geq 1$, $k\geq1$, $\ell\geq 0$.
Consider $m=M+1$. Using~\eqref{comm_m_n=1,l=2}, we find that
\[L_{M+1,2[M/2]+2}=c \left[L_{1,2},L_{M,2[M/2]}\right]\]
for a certain constant~$c$. So 
\begin{align*}
\left[L_{M+1,2[M/2]+2]},L_{n,\ell}\right] &= c \left[\left[L_{1,2},L_{M,2[M/2]}\right],L_{n,2\ell}\right] \\
&=
c \left[[L_{1,2},L_{n,2\ell}],L_{M,2[M/2]}\right] + c \left[L_{1,2}, \left[L_{M,2[M/2]},L_{n,2\ell}\right]\right],
\end{align*}
which
belongs to ${\rm Vira}_{\rm like}$.
In a similar way, by using~\eqref{comm_m_n=1,l=2} we find
that for $k=[M/2], [M/2]-1,\dots,1$ we have $[L_{M+1,2k},L_{n,2\ell}]\in {\rm Vira}_{\rm like}$.
Hence we have proved the statement for any fixed $m\geq1$
and for all $n\geq 1$, $k\geq1$, $\ell\geq 0$.

It suffices to show the following remaining cases: 
\begin{enumerate}
\item[A.] $(m=-1, k=0, n\geq 1, \ell\geq1)$; 
\item[B.] $(m=0,k=0,n\geq1,\ell\geq1)$.
\end{enumerate}
Namely, we consider now either $(m=-1,k=0)$ or $(m=0,k=0)$.
If $n=1$ then the statement is true due to the identity~\eqref{comm_m_n=1,l=2}.
Suppose the statement is true for $n\le N$ and arbitrary $\ell\ge1$, then for $n=N+1$,
we have
\[L_{N+1,2[(N+2)/2]}=c [L_{1,2},L_{N,2[(N+1)/2]}]\]
for a certain constant~$c$.
So 
\begin{align*}
[L_{m,k},L_{N+1,2[(N+2)/2]}]&= c [L_{m,k},[L_{1,2},L_{N,2[N/2]}]] \\
&=
c [[L_{m,k},L_{1,2}],L_{N,2[N/2]}]+ c [L_{1,2},[L_{m,k},L_{N,2[N/2]}]] \in {\rm Vira}_{\rm like}.
\end{align*}

Hence we completed the proof of the existence part of the theorem. 

In view of Lemma~\ref{linearindep923}, the constants 
$c_{m,2k,n,2\ell,2h}$ $(m,n\geq -1$, $0\leq k \leq[(m+1)/2]$, $0\leq \ell \leq[(n+1)/2]$, 
$0\leq h\leq [(m+n+1)/2])$ if exist must be unique.
From Proposition~\ref{univeral_comm_n=1,l=0,2} we know that the part of the constants 
$c_{m,2k,1,2\ell,2h}$ $(m\geq -1$, $0\leq k \leq[(m+1)/2]$, $0\leq \ell \leq 1$, 
$0\leq h\leq [(m+2)/2])$ are independent of the Frobenius manifold. Then from the 
above induction procedure we know that all the constants are independent of the Frobenius manifold.
The induction procedure also implies the vanishing of $c_{m,2k,n,2\ell,2h}$ whenever $h<k+\ell$.

The theorem is proved.
\end{proof}

The following proposition, as mentioned in the Introduction, 
gives a description of the essential structure constants of~${\rm Vira}_{\rm half}$.

\begin{proposition}\label{corollary10425} The essential structure constants of the Virasoro-like algebra of a
Frobenius manifold satisfy the relation
\begin{align}
& \sum_{k=0}^{[(m+1)/2]} \sum_{\ell=0}^{[(n+1)/2]} \nu^{2k} \tilde{\nu}^{2\ell} \sum_{h = k+\ell}^{[(m+n+1)/2]} c_{m,2k,n,2\ell,2h} \,
\sum_{b=2h}^{m+n+1}\frac{(-1)^b}{b!} \left[\begin{array}{c} b\\ 2h\end{array}\right] \binom{x}{m+n+1-b} \nn\\
&  = - \f{(m+1)! (n+1)!}{4 (m+n+1)!} \nn\\
& \qquad \biggl(\left(\binom{\nu+x}{m+1}
+\binom{-\nu+x}{m+1} \right)\left( \binom{\tilde{\nu}+x-m}{n+1}
+\binom{-\tilde{\nu}+x-m}{ n+1} \right) \nn\\
& \qquad\quad - \left(\binom{\tilde{\nu}+x}{ n+1 }  + \binom{-\tilde{\nu}+x}{n+1} \right)
\left(\binom{\nu+x-n}{ m+1}
+\binom{-\nu+x-n}{ m+1} \right)\biggr),\label{comm_nu_nu'_explicitoncorele424}
\end{align}
for $m,n\geq-1$.
Moreover, the following formulae hold true:
\begin{align}
& c_{m,0,n,0,2h} = (m-n) \delta_{h,0}, \quad 0\leq h\leq [(m+n+1)/2], \label{cm0n0h424}\\
& c_{-1,0,n,2\ell,2h} = - (n+1) \delta_{h,\ell}, \quad \ell\leq h \leq[n/2], \label{cm10n2l2l424}\\
& c_{0,0,n,2\ell,2h} = - n \delta_{h,\ell}, \quad \ell\leq h\leq [(n+1)/2], \label{c00n2l2l424}\\
& c_{1,0,n,2\ell,2h} = - 2 \frac{(2h-4\ell+3)}{(n+2)(2\ell-1)} \, \binom{2h}{2\ell-2} - (n-1) \delta_{h,2\ell}, \quad \ell\leq h\leq [(n+2)/2], \label{c10n2l2h424}\\
& c_{2k-1,2k,n,2\ell,2h}  =  \frac{(2k)!}{\binom{n+2k}{n+1}}  \left\{\begin{array}{c} 2h - 2\ell \\ 2k-1 \end{array}\right\}
 \binom{2h}{2\ell}, \quad k\ge1, \, k+\ell\leq h\leq [(n+2k)/2]. \label{c2km12kn2l2h424}
\end{align}
Here,  $m,n\geq-1$ and $0\leq\ell\leq [(n+1)/2]$.
\end{proposition}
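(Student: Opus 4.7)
The plan is to use Theorem~\ref{viralikealgebra}: since the essential structure constants do not depend on the choice of Frobenius manifold, I may compute them inside the $R=0$ case, where the commutator $[L_m(\nu),L_n(\tilde\nu)]$ and the operator $L_{m+n}(\nu)$ admit the fully explicit expressions \eqref{comm_nu_nu'_explicit} and \eqref{L_m+n}.

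For the main identity \eqref{comm_nu_nu'_explicitoncorele424}, I would start from the expansion
\[
[L_m(\nu),L_n(\tilde\nu)] \;=\; \sum_{k,\ell,h} \nu^{2k}\tilde\nu^{2\ell}\, c_{m,2k,n,2\ell,2h}\,L_{m+n,2h} \;+\;\text{(central terms)},
\]
substitute \eqref{L_m+n} for each $L_{m+n,2h}$ on the right-hand side, and compare with \eqref{comm_nu_nu'_explicit} mode-by-mode in the normal-ordered basis $:a_p(\,\cdot\,)a^{m+n-p}:$. Setting $x:=\mu+p$, this reduces to a polynomial identity in $x$, $\nu$, $\tilde\nu$. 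Applying Lemma~\ref{coefficients} to expand every $\binom{\pm\nu+x-\ast}{\cdot}$ in the basis $\{\binom{x}{j}\}_j$ and factoring out $(m+n+1)!/4$ produces \eqref{comm_nu_nu'_explicitoncorele424}. The central contributions from \eqref{comm_nu_nu'_explicit} do not enter this identity (which only concerns the $L_{m+n,2h}$-expansion) and are already accounted for by the central extension recorded in \eqref{structureviralike422-424}.

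The formulas \eqref{cm0n0h424}--\eqref{c10n2l2h424} then follow by specialization of \eqref{comm_nu_nu'_explicitoncorele424}: putting $\nu=\tilde\nu=0$ and invoking \eqref{Lmn0407} yields \eqref{cm0n0h424}; putting $m\in\{-1,0\}$ collapses the right-hand side to an elementary binomial that, re-expanded through $a_{n,2\ell}(x)$ of \eqref{defamk425}, produces \eqref{cm10n2l2l424} and \eqref{c00n2l2l424}; and \eqref{c10n2l2h424} is an immediate rewriting of \eqref{comm_m_n=1,l=0} from Proposition~\ref{univeral_comm_n=1,l=0,2}.

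The main obstacle will be the Stirling-number formula \eqref{c2km12kn2l2h424}. Here the strategy is to set $m=2k-1$ in \eqref{comm_nu_nu'_explicitoncorele424} and extract the coefficient of $\nu^{2k}$: because $\binom{\pm\nu+y}{2k}$ has top-degree term $\nu^{2k}/(2k)!$ independent of~$y$, the right-hand side collapses, and after further extracting $\tilde\nu^{2\ell}$ via Lemma~\ref{coefficients} one obtains an identity of the schematic form
\[
\sum_{h\ge k+\ell} c_{2k-1,2k,n,2\ell,2h}\,a_{n+2k-1,2h}(x)\;=\;-\frac{(n+1)!}{(n+2k)!}\bigl(a_{n,2\ell}(x-2k+1)-a_{n,2\ell}(x)\bigr).
\]
Passing to the generating function $A_{k,x}(y)=(1+y)^x\log^k(1+y)/k!$ of \eqref{generatingaA423425} converts this into an identity of formal series in~$y$. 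Substituting the conjectured closed form on the left, the combinatorial factor $\left\{\begin{array}{c}2h-2\ell\\2k-1\end{array}\right\}\binom{2h}{2\ell}$ Cauchy-convolves with $\log^{2\ell}(1+y)/(2\ell)!$; applying the classical identity $\sum_n \left\{\begin{array}{c}n\\j\end{array}\right\}u^n/n!=(e^u-1)^j/j!$ with $u$ proportional to $\log(1+y)$, both sides reduce to a common expression proportional to $A_{2\ell,x}(y)\bigl((1+y)^{1-2k}-1\bigr)$. The delicate bookkeeping of the two binomial-like factors is where I expect the bulk of the calculation to live.
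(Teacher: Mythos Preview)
Your proposal is correct and matches the paper's own proof essentially step for step: reduction to the trivial Frobenius manifold via Theorem~\ref{viralikealgebra}, identification of the $:a_p(\cdot)a^{m+n-p}:$ coefficients with the variable $x=\mu+p$, the specializations $m=-1,0$ for \eqref{cm10n2l2l424}--\eqref{c00n2l2l424}, and the generating-function argument with $A_{k,x}(y)$ together with the Stirling identity $\sum_n\left\{\begin{smallmatrix}n\\j\end{smallmatrix}\right\}u^n/n!=(e^u-1)^j/j!$ for \eqref{c2km12kn2l2h424}. The only cosmetic difference is that the paper passes to the one-dimensional Frobenius manifold (so $\mu=0$ as well as $R=0$), whereas you keep $\mu$ and absorb it into~$x$; this is immaterial.
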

\begin{proof}
We note that the relations \eqref{cm0n0h424}, \eqref{c10n2l2h424} and formula~\eqref{c2km12kn2l2h424} with $k=1$ were already proved above.
Let us prove the remaining relations.
Since the essential structure constants of the Virasoro like algebra are independent of the Frobenius manifold, we can
assume that $M$ is the one-dimensional Frobenius manifold, which has vanishing $\mu$ and $R$.
Applying $\sum_{k,\ell\geq0} \nu^{2k} \, \tilde{\nu}^{2\ell}$ on
both sides of~\eqref{structureviralike422-424} we find
\begin{align}
& - \f{(m+1)! (n+1)!}{8} \sum_{p\in\mathbb{Z}+\f12}(-1)^{p-m-n-\f12} \nn\\
& : a_p \biggl( \left(\binom{\nu+\mu+p}{m+1}
+\binom{-\nu+\mu+p}{m+1} \right) \left( \binom{\tilde{\nu}+\mu+p-m}{n+1}
+\binom{-\tilde{\nu}+\mu+p-m}{ n+1} \right) \nn\\
& - \left(\binom{\tilde{\nu}+\mu+p}{n+1}  + \binom{-\tilde{\nu}+\mu+p}{n+1} \right)
\left(\binom{\nu+\mu+p-n}{ m+1}
+\binom{-\nu+\mu+p-n}{ m+1} \right)\biggr) a^{m+n-p} : \nn \\
& + \f12 \delta_{m,1} \delta_{n,-1} \left(-\nu^2 + \f{1}{4} \right) 
-\f12 \delta_{m,-1} \delta_{n,1} \left(- \tilde{\nu}^2 + \f{1}{4}\right) \nn\\
&  = \sum_{k,\ell\geq0} \nu^{2k} \tilde{\nu}^{2\ell} 
\sum_{h = k+\ell}^{[(m+n+1)/2]} c_{m,2k,n,2\ell,2h} \f{(m+n+1)!}{4} \sum_{p\in\mathbb{Z}+\f12} (-1)^{p-m-n-\f12}  \nn\\
& \quad : a_p {\rm Coef} \left(\left(\binom{z +\mu+p}{m+n+1}
+\binom{-z+\mu+p}{m+n+1} \right), z^{2h}\right) a^{m+n-p} : \nn\\
& \quad +  c_{m,0,n,0,0} \f{\delta_{m+n,0}}{16} - \delta_{m,1} \delta_{n,-1}  \frac{1}2 \nu^2 
+ \delta_{m,-1} \delta_{n,1} \frac{1}2 \tilde{\nu}^2 , \nn
\end{align}
which implies~\eqref{comm_nu_nu'_explicitoncorele424}. Here we used $c_{1,0,-1,0,0}=-c_{-1,0,1,0,0}=2$.

Let us continue to show \eqref{cm10n2l2l424}, \eqref{c00n2l2l424} and~\eqref{c2km12kn2l2h424}. Taking $m=-1$ and
looking at the coefficient of~$\nu^0$ on both sides of~\eqref{comm_nu_nu'_explicitoncorele424} we have
\begin{align}
& \sum_{\ell=0}^{[(n+1)/2]} \tilde{\nu}^{2\ell} \sum_{h = \ell}^{[n/2]} c_{-1,0,n,2\ell,2h}
\sum_{b=2h}^{n}\frac{(-1)^b}{b!} \left[\begin{array}{c} b\\ 2h\end{array}\right] \binom{x}{n-b} \nn\\
& = - \f{n+1}{2} \biggl( \binom{\tilde{\nu}+x+1}{n+1}
 + \binom{-\tilde{\nu}+x+1}{n+1}  - \binom{\tilde{\nu}+x}{n+1} - \binom{-\tilde{\nu}+x}{n+1}
\biggr)\nn\\
& = - \f{n+1}{2} \left(\binom{\tilde{\nu}+x+1}{n} + \binom{-\tilde{\nu}+x+1}{n} \right), \nn
\end{align}
which together with~\eqref{elementarybinompoly} gives~\eqref{cm10n2l2l424}.
Similarly, taking $m=0$ and
looking at the coefficient of~$\nu^0$ on both sides of~\eqref{comm_nu_nu'_explicitoncorele424} we have
\begin{align}
& \sum_{\ell=0}^{[(n+1)/2]} \tilde{\nu}^{2\ell} \sum_{h = k+\ell}^{[(n+1)/2]} c_{0,0,n,2\ell,2h}
\sum_{b=2h}^{n+1}\frac{(-1)^b}{b!} \left[\begin{array}{c} b\\ 2h\end{array}\right] \, \binom{x}{n+1-b} \nn\\
&  = - \f{1}{4} \biggl(\left( 2x \right)\left( \binom{\tilde{\nu}+x}{n+1}
+\binom{-\tilde{\nu}+x}{ n+1} \right) - \left(\binom{\tilde{\nu}+x}{ n+1 }  + \binom{-\tilde{\nu}+x}{n+1} \right)
\left(2x-2n \right)\biggr) \nn\\
& = - \f{n}{2} \left( \binom{\tilde{\nu}+x}{n+1}
+\binom{-\tilde{\nu}+x}{ n+1} \right), \nn
\end{align}
which proves~\eqref{c00n2l2l424}. Finally, taking $m=2k-1$ $(k\ge1)$ and looking at the coefficient of~$\nu^{2k}$
on both sides of~\eqref{comm_nu_nu'_explicitoncorele424}, we have, for $n\ge-1$, that
\begin{align}
& \sum_{\ell=0}^{[(n+1)/2]} \tilde{\nu}^{2\ell} \sum_{h = k+\ell}^{[(2k+n)/2]} c_{2k-1,2k,n,2\ell,2h}
\sum_{b=2h}^{2k+n}\frac{(-1)^b}{b!} \left[\begin{array}{c} b\\ 2h\end{array}\right] \binom{x}{2k+n-b} \nn\\
& = - \f{(2k)! (n+1)!}{2 (2k+n)!}  \left(\binom{\tilde{\nu}+x-(2k-1)}{n+1}
+ \binom{-\tilde{\nu}+x-(2k-1)}{n+1}  \right. \nn\\
& \left. \qquad\qquad\qquad - \binom{\tilde{\nu}+x}{n+1}  - \binom{-\tilde{\nu}+x}{n+1} \right). \label{c2km12kn2l2hequivid425}
\end{align}
From the uniqueness of the essential structure constants it suffices to verify
that if we define $c_{2k-1,2k,n,2\ell,2h}$ by~\eqref{c2km12kn2l2h424},
then~\eqref{c2km12kn2l2hequivid425} becomes an identity, that is,
\begin{align}
& 2 (2k-1)! \sum_{\ell=0}^{[(n+1)/2]} \tilde{\nu}^{2\ell} \sum_{h = k+\ell}^{[(2k+n)/2]} 
\left\{\begin{array}{c} 2h - 2\ell \\ 2k-1 \end{array}\right\} \binom{2h}{2\ell}
\sum_{b=2h}^{2k+n}\frac{(-1)^b}{b!} \left[\begin{array}{c} b\\ 2h\end{array}\right] \binom{x}{2k+n-b} \nn\\
& = - \binom{\tilde{\nu}+x-(2k-1)}{n+1} - \binom{-\tilde{\nu}+x-(2k-1)}{n+1} + \binom{\tilde{\nu}+x}{n+1} + \binom{-\tilde{\nu}+x}{n+1}.  \nn
\end{align}
To show the validity of this identity for arbitrary $n\geq-1$, similarly to the proof of Proposition~\ref{univeral_comm_n=1,l=0,2},
it suffices to show that
\begin{align}
& 2  (2k-1)! \sum_{n=-1}^\infty y^{n+1} \sum_{\ell=0}^{[(n+1)/2]} \tilde{\nu}^{2\ell} \sum_{h = k+\ell}^{[(2k+n)/2]}
\left\{\begin{array}{c} 2h - 2\ell \\ 2k-1 \end{array}\right\} \binom{2h}{2\ell} a_{2k-1+n,2h} \nn\\
& = - (1+y)^{\tilde{\nu}+x-(2k-1)} - (1+y)^{-\tilde{\nu}+x-(2k-1)}  + (1+y)^{\tilde{\nu}+x} + (1+y)^{-\tilde{\nu}+x}, \nn
\end{align}
where we used the notation given by~\eqref{defamk425}.
Now by using~\eqref{generatingaA423425} we find that
the left-hand side can be simplified to
\begin{align}
& 2 (2k-1)! y^{-(2k-1)} (1+y)^x \sum_{\ell\ge0} \tilde{\nu}^{2\ell} \sum_{h \ge k+\ell}
\left\{\begin{array}{c} 2h - 2\ell \\ 2k-1 \end{array}\right\} \frac{\log^{2h}(1+y)}{(2h-2\ell)! \, (2\ell)!} \nn\\
& = 2 (2k-1)! \, y^{-(2k-1)} (1+y)^x \sum_{\ell\ge0} \log^{2\ell}(1+y) \frac{\tilde{\nu}^{2\ell}}{(2\ell)!} \nn\\
& \qquad \times \left(\frac{\left(e^{\log(1+y)}-1\right)^{2k-1}}{(2k-1)!} + \frac{\left(e^{-\log(1+y)}-1\right)^{2k-1}}{(2k-1)!}\right) \nn\\
& = \left(1-(1+y)^{-(2k-1)}\right) (1+y)^x \left((1+y)^{\tilde{\nu}}+(1+y)^{-\tilde{\nu}}\right). \nn
\end{align}
The proposition is proved.
\end{proof}

We note that formula~\eqref{comm_nu_nu'_explicitoncorele424} in the above proposition
could be viewed as a generating formula for the essential structure constants.
The explicit expression~\eqref{c2km12kn2l2h424} will be
used in Section~\ref{section5} for the construction of Virasoro constraints for the Hodge partition function.

\begin{remark}
For the case when $\mu_\alpha$ do not contain half integers,
one can also take $\nu=0$ in~$L_m(\nu)$ for $m\leq -2$. In this case, it is shown in~\cite{DZ-norm} that
\begin{align}
\left[ L_m(0), L_n(0) \right] = (m-n) L_{m+n}(0) +  l \frac{m(m^2-1)}{12} \delta_{m+n,0}.
\end{align}
The central charge for this realization of Virasoro algebra is equal to~$l$.
\end{remark}

We finish this section by proving Proposition~\ref{virahalftwistthm}.

\begin{proof}[Proof of Proposition~\ref{virahalftwistthm}]
Since the essential structure constants of the Virasoro like algebra are independent of a Frobenius manifold, it suffices to prove the proposition for the one-dimensional Frobenius manifold. In this case, it is equivalent to show validity of the following identity
for $\forall\,m,n\geq-1$:
\begin{align}
& \left(\binom{1/2+x}{m+1}
+\binom{-1/2+x}{m+1} \right) \left(\binom{1/2+x-m}{n+1}
+\binom{-1/2+x-m}{n+1} \right) \nn\\
& - \, \left(\binom{1/2+x}{ n+1} + \binom{-1/2+x}{n+1} \right)
\left(\binom{1/2+x-n}{ m+1}+\binom{-1/2+x-n}{ m+1} \right)  \nn\\
& = \; - \, 2\, \f{(m-n)\, (m+n+1)!}{(m+1)! \, (n+1)!} \, \left(\binom{1/2+x}{m+n+1} + \binom{-1/2+x}{m+n+1}\right),\nn
\end{align}
which can be easily verified.
\end{proof}

\section{Genus-zero Virasoro-like constraints}\label{section4}
It was proved in~\cite{DZ-norm} (cf.~also~\cite{DZ2, EHX, LT}) that
 the series~$\mathcal{F}_0(\bt)$ satisfies the following genus zero Virasoro constraints
\begin{equation}\label{viraconstraints}
e^{-\e^{-2}\F_0(\bt)} L_m e^{\e^{-2}\F_0(\bt)} = {\rm O}(1), \quad \epsilon\rightarrow 0, \, m\geq -1.
\end{equation}
Let us prove Theorem~\ref{theoremcor415} by using this result. 

\begin{proof}[Proof of Theorem~\ref{theoremcor415}]
According to~\cite{DZ2},
the theorem is true for $L_{m,0}$, $m\ge -1$.
In order to prove the statement for $L_{m,2k}$ with $m, k\geq1$, we first prove the following lemma.
\begin{lemma}\label{AB}
Let $A,B$ be operators of the form
$$\epsilon^2 \sum a^{\alpha,i;\beta,j} \frac{\p^2}{\p t^{\alpha,i}t^{\beta,j}} + \sum b^{\alpha,i}_{\beta,j}  t^{\beta,j} 
\frac{\p}{\p t^{\alpha,i}} + \frac{1}{\epsilon^2} \sum c_{\alpha,i;\beta,j} t^{\alpha,i}t^{\beta,j} + {\rm const},$$
where $a,b,c$'s are constants.
If as $\epsilon\rightarrow 0$, $$\tau^{-1} A (\tau) = \mathcal{O}(1), \quad \tau^{-1} B (\tau) = \mathcal{O}(1),$$ then
    $$\tau^{-1} [A,B] (\tau) = \mathcal{O}(1).$$
\end{lemma}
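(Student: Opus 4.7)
The plan is to reduce the lemma to a purely algebraic observation via the elementary identity
\[
\tau^{-1} [A,B] \tau \;=\; [\tau^{-1} A \tau,\; \tau^{-1} B \tau],
\]
valid for any invertible $\tau$. Setting $\tilde A := \tau^{-1} A \tau$ and $\tilde B := \tau^{-1} B \tau$, it then suffices to argue directly that $[\tilde A, \tilde B] = {\rm O}(1)$ given that $\tilde A$ and $\tilde B$ are both ${\rm O}(1)$.

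The key step is to compute the structure of $\tilde A$ explicitly by conjugating each of the three types of terms in $A$, using
\[
\tau^{-1} \f{\pa}{\pa t^{\alpha,i}} \tau \;=\; \f{\pa}{\pa t^{\alpha,i}} + \f{1}{\e^2} \f{\pa \F_0}{\pa t^{\alpha,i}}.
\]
The $\e^2 a^{\alpha,i;\beta,j}\pa_{\alpha,i}\pa_{\beta,j}$ piece of $A$ generates terms of orders $\e^{2}$, $\e^{0}$, and $\e^{-2}$, the last of which is the multiplication operator $a^{\alpha,i;\beta,j}(\pa_{\alpha,i}\F_0)(\pa_{\beta,j}\F_0)$; the $b^{\alpha,i}_{\beta,j} t^{\beta,j}\pa_{\alpha,i}$ piece contributes an additional $\e^{-2}$ term $b^{\alpha,i}_{\beta,j} t^{\beta,j}\pa_{\alpha,i}\F_0$; the $\e^{-2} c t t$ piece is left unchanged. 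Schematically,
\[
\tilde A \;=\; \e^2 P_A \;+\; Q_A \;+\; \e^{-2} R_A,
\]
where $P_A$, $Q_A$, $R_A$ are differential operators whose coefficients are $\e$-independent (polynomial in $t$ and in derivatives of $\F_0$), with $P_A$ of order two, $Q_A$ of order at most one, and $R_A$ of order zero.

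The hypothesis $\tilde A = {\rm O}(1)$ forces $R_A = 0$, so $\tilde A = \e^2 P_A + Q_A$ involves only non-negative powers of $\e$; the same holds for $\tilde B = \e^2 P_B + Q_B$. Expanding
\[
[\tilde A, \tilde B] \;=\; \e^4 [P_A, P_B] + \e^2\bigl([P_A,Q_B] + [Q_A,P_B]\bigr) + [Q_A,Q_B],
\]
each commutator on the right is an $\e$-independent differential operator, so the entire expression is a polynomial in $\e$ with only non-negative exponents. This gives $[\tilde A, \tilde B] = {\rm O}(1)$, and hence $\tau^{-1}[A,B]\tau = {\rm O}(1)$.

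There is no real conceptual obstacle; the work is essentially bookkeeping. The one point that must be checked carefully is that the coefficients $P_A, Q_A, R_A$ (and the analogous $P_B, Q_B, R_B$) really carry no hidden powers of $\e$, so that the final expansion of $[\tilde A, \tilde B]$ is manifestly a non-negative power series in $\e$; this is immediate from tracing the sole sources of $\e$ (the prefactors in $A$, $B$ and the $\e^{-2}$ in $\tau^{-1}\pa\tau$) combined with the fact that $\F_0$ is $\e$-independent.
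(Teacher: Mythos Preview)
Your argument is correct. The paper's own proof is much terser and takes a different route: it works directly with the \emph{function} $\tau^{-1}[A,B](\tau)$ rather than with the conjugated operators. Writing $B\tau=\tau\cdot\mathcal{O}(1)$ and $A\tau=\tau\cdot\mathcal{O}(1)$, the paper expands $[A,B](\tau)=A(\tau\cdot\mathcal{O}(1))-B(\tau\cdot\mathcal{O}(1))$ and then uses an (implicit) Leibniz computation, exploiting the special $\epsilon$-grading of $A$ and $B$, to see that each term is again $\tau\cdot\mathcal{O}(1)$.

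Your conjugation approach is more structural: you make explicit the decomposition $\tilde A=\epsilon^2 P_A+Q_A+\epsilon^{-2}R_A$ with $R_A$ a multiplication operator, observe that the hypothesis kills $R_A$ (this is where the special form of $A$ enters, since only then is the $\epsilon^{-2}$-piece of $\tilde A$ purely zeroth order), and then the commutator visibly has no negative $\epsilon$-powers. This has the advantage of making transparent exactly why the particular $\epsilon$-grading in the lemma matters, and it gives the slightly stronger conclusion that the conjugated commutator is $\mathcal{O}(1)$ as an \emph{operator}, not just when applied to~$1$. The paper's argument is quicker to write down but leaves the Leibniz step to the reader.

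One small point of presentation: the paper's notation $\tau^{-1}A(\tau)$ means the function $\tau^{-1}\cdot(A\tau)$, i.e.\ your $\tilde A$ applied to the constant~$1$, not the operator $\tilde A$ itself. Your argument still goes through because, as you implicitly use, $R_A$ is a multiplication operator, so $\tilde A(1)=\mathcal{O}(1)$ is equivalent to $R_A=0$, which in turn is equivalent to the operator $\tilde A$ being $\mathcal{O}(1)$. It would be worth saying this explicitly.
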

\begin{proof} We have
\begin{equation*}
\begin{split}
[A,B] (\tau) &= A \left(\tau \mathcal{O}(1)\right) - B \left(\tau \mathcal{O}(1)\right)\\
&= A (\tau) \mathcal{O}(1) + \tau \mathcal{O}(1) + B (\tau) \mathcal{O}(1) = \tau \mathcal{O}(1).
\end{split}
\end{equation*}
\end{proof}

\begin{lemma} \label{sub-family}
The following identities hold true:
\begin{align}
\label{L2k-12k}
\tau^{-1} L_{2k-1,2k} \tau = \mathcal{O}(1),\\
\label{L2k2k}
\tau^{-1} L_{2k,2k} \tau = \mathcal{O}(1),
\end{align}
as $\epsilon\rightarrow 0$, where $k\geq 1$.
\end{lemma}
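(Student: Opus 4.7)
My plan is to establish the two identities by separate arguments. The identity for $L_{2k-1,2k}$ will be proved directly from the explicit formula~\eqref{L2k-1,2k} by reducing the claim to a quadratic PDE for $\F_0$, while the identity for $L_{2k,2k}$ will be deduced from the first via a commutator with~$L_{1,0}$ and an application of Lemma~\ref{AB}.

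For $L_{2k-1,2k}$: writing $\tau = e^{\e^{-2}\F_0(\bt)}$, the standard conjugation identities applied to~\eqref{L2k-1,2k} give
\[
\tau^{-1} L_{2k-1,2k} \tau \,=\, \e^{-2} Q_k(\bt) + \mathcal{O}(1),
\]
where
\[
Q_k(\bt) := \sum_{m\geq 0} \tilde t^{\alpha,m} \f{\p\F_0}{\p t^{\alpha,m+2k-1}} - \f{1}{2} \sum_{m=0}^{2k-2} (-1)^m \eta^{\alpha\gamma} \f{\p\F_0}{\p t^{\alpha,m}} \f{\p\F_0}{\p t^{\gamma,2k-2-m}}.
\]
The task reduces to verifying $Q_k \equiv 0$. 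Using the standard identity $\p\F_0/\p t^{\alpha,p} = \sum_j \tilde t^{\beta,j} \Omega_{\alpha,p;\beta,j}(v_{\rm top})$ (a direct consequence of~\eqref{F_0} together with the Euler--Lagrange equation for $v_{\rm top}$), $Q_k\equiv 0$ becomes equivalent, after matching coefficients of $\tilde t^{\alpha,p} \tilde t^{\beta,q}$, to the pointwise quadratic relations
\[
\Omega_{\alpha,p+2k-1;\beta,q}(v) + \Omega_{\beta,q+2k-1;\alpha,p}(v) = \sum_{m=0}^{2k-2} (-1)^m \eta^{\mu\nu} \Omega_{\mu,m;\alpha,p}(v) \Omega_{\nu,2k-2-m;\beta,q}(v)
\]
for all $v\in M$. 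To verify these, I will work with the defining generating series
\[
(z+w)\sum_{i,j\ge 0}\Omega_{\alpha,i;\beta,j}(v) z^i w^j = \eta^{\mu\nu} \p_\mu \theta_\alpha(v;z) \p_\nu\theta_\beta(v;-w) - \eta_{\alpha\beta},
\]
form the product of two copies with distinct spectral parameters, and collapse the resulting expression using the orthogonality~\eqref{yb}, $\eta^{-1}\Phi(v;-z)^T\eta\,\Phi(v;z)=I$; the required identity is then read off upon extracting the total-degree contribution corresponding to the shift $2k-1$.

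For $L_{2k,2k}$: specializing formula~\eqref{comm_m_n=1,l=0} of Proposition~\ref{univeral_comm_n=1,l=0,2} to the commutator $[L_{2k-1,2k},L_{1,0}]$ and invoking the convention $L_{n,2h}=0$ for $h>[(n+1)/2]$ (so that every term with $h>k$ on the right-hand side vanishes), a short computation of the surviving $h=k$ contribution yields
\[
[L_{2k-1,2k},L_{1,0}] \,=\, \f{4k-2}{2k+1}\, L_{2k,2k}, \quad k\ge 1,
\]
with nonzero coefficient. Both $L_{2k-1,2k}$ and $L_{1,0}$ fit the form required by Lemma~\ref{AB}, and $\tau^{-1} L_{1,0} \tau = \mathcal{O}(1)$ holds by~\eqref{viraconstraints} while $\tau^{-1} L_{2k-1,2k} \tau = \mathcal{O}(1)$ was just shown. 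Hence Lemma~\ref{AB} gives $\tau^{-1}[L_{2k-1,2k},L_{1,0}]\tau = \mathcal{O}(1)$, and consequently $\tau^{-1} L_{2k,2k}\tau = \mathcal{O}(1)$.

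The main obstacle I anticipate is the verification of the quadratic $\Omega$-identity. Morally it is an infinitesimal bilinear (Hirota-type) identity for the topological tau-function, but careful bookkeeping of shifts, signs, and the role of the $\eta_{\alpha\beta}$ correction term in the numerator of the generating series will be needed to isolate the identity cleanly for every $k$.
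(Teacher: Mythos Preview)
Your argument for $L_{2k,2k}$ via the commutator $[L_{2k-1,2k},L_{1,0}]$ and Lemma~\ref{AB} is exactly what the paper does, and your computation of the coefficient $(4k-2)/(2k+1)$ is correct.

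Your argument for $L_{2k-1,2k}$ is correct but takes a different route from the paper. You propose to verify the constraint for \emph{all} $k\ge1$ at once by proving the quadratic $\Omega$--identity
\[
\Omega_{\alpha,p+2k-1;\beta,q}+\Omega_{\beta,q+2k-1;\alpha,p}=\sum_{m=0}^{2k-2}(-1)^m\eta^{\mu\nu}\Omega_{\mu,m;\alpha,p}\,\Omega_{\nu,2k-2-m;\beta,q}
\]
directly from the generating series and the orthogonality~\eqref{yb}. This is a valid strategy (the identity is indeed a consequence of~\eqref{theta3409}/\eqref{yb}; see~\cite{DLYZ}). The paper, by contrast, only checks the case $k=1$ directly (again by reference to~\cite{DLYZ}) and then runs an inductive zigzag between the two families: having $\tau^{-1}L_{2l-1,2l}\tau=\mathcal{O}(1)$ it obtains $\tau^{-1}L_{2l,2l}\tau=\mathcal{O}(1)$ via $[L_{1,0},L_{2l-1,2l}]$, and then uses the additional commutator identity $[L_{1,2},L_{2l,2l}]=(2l+1)L_{2l+1,2l+2}$ (a specialization of~\eqref{comm_m_n=1,l=2}) together with Lemma~\ref{AB} to reach $L_{2l+1,2l+2}$. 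Your approach is more uniform and avoids the extra commutator with~$L_{1,2}$, at the cost of having to establish the $\Omega$--identity for general~$k$; the paper's approach trades that general identity for a single extra structural input from the Virasoro-like algebra. Both are sound, and you correctly flag the $\Omega$--identity as the place where the real work lies.
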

\begin{proof}
First let us show that if~\eqref{L2k-12k} holds true for a certain $k\geq1$, then~\eqref{L2k2k} also holds true for this~$k$. Indeed,
By using~\eqref{comm_m_n=1,l=0} we have
$$[L_{1,0},L_{2k-1,2k}] = -\frac{2(2k-1)}{2k+1} L_{2k,2k}.$$
Since $\tau^{-1} L_{1,0} \tau=\mathcal{O}(1)$ and $\tau^{-1} L_{2k-1,2k} \tau = \mathcal{O}(1)$, by using Lemma~\ref{AB} we find that
$\tau^{-1} L_{2k,2k} \tau=\mathcal{O}(1)$.

Let us now show the validity of~\eqref{L2k-12k}.
For $k=1$, the equality~\eqref{L2k-12k} is true due to recursion relations satisfied by $\Omega_{\alpha,p;\beta,q}$ (for the 
details about this verification see e.g.~\cite{DLYZ}).
Suppose the equality~\eqref{L2k-12k} holds true for $k=l$ ($l\geq 1$). Then we know that~\eqref{L2k2k} also holds true for $k=l$.
Using~\eqref{comm_m_n=1,l=2}, we have
$$[L_{1,2},L_{2l,2l}] = (2l+1)L_{2l+1,2l+2}.$$
Since $\tau^{-1} L_{1,2} \tau=\mathcal{O}(1)$, $\tau^{-1} L_{2l,2l} \tau=\mathcal{O}(1)$, using Lemma~\ref{AB}, we find that
\[\tau^{-1} L_{2l+1,2l+2} \tau=\mathcal{O}(1).\] 
The lemma is proved.
\end{proof}

\noindent {\it End of the proof of Theorem~\ref{theoremcor415}.} Note that
the coefficient in front of $L_{m+1,2k}$ in the right-hand side of~\eqref{comm_m_n=1,l=0} is $m-1-\f{2k(2k-3)}{m+2}$, which is \emph{nonzero} for all $m\geq 1$ and $1\leq k\leq [\f{m+1}{2}]$. This fact implies that $L_{m+1,2k}$ can be expressed as linear combinations of
    $[L_{m,2k},L_{1,0}]$ and $L_{m+1,2h}$, $h\geq k+1$.
Then the theorem can be proved by using Lemmas~\ref{AB}, \ref{sub-family} and by induction.
\end{proof}

\noindent We call~\eqref{viralikeconstraints} the \textit{Virasoro-like constraints for the genus 
zero free energy~$\mathcal{F}_0(\bt)$}. As we have mentioned in the Introduction 
the $L_{1,2}$-constraint in~\eqref{viralikeconstraints} was proved 
in~\cite{EHX} and~\cite{LT}, and 
the $L_{2,2}$-constraint in~\eqref{viralikeconstraints} was proved in~\cite{LT}. 

\section{Virasoro constraints for the Hodge partition function} \label{section5}

In this section, we assume that the calibrated Frobenius manifold~$M^l$ under consideration is {\it semisimple}.

\begin{proof}[Proof of Theorem~\ref{theorem30407}] Following~\cite{Zhou},
define
\begin{equation} \label{Vira_deform_s}
L_n^{\rm H} := e^{U} \circ L_{n,0} \circ e^{-U}, \quad n\geq -1,
\end{equation}
where $U := - \sum_{k\geq 1} \frac{\sigma_{2k-1}B_{2k}}{(2k)!}L_{2k-1,2k}$.
The operators $L_{n,0}$, $n\geq -1$, satisfy the Virasoro commutation relations, so do the
 operators $L_n^{\rm H}$. Moreover,
 due to~\eqref{hodgepartitionfunction420} we have
$L_n^{\rm H} Z_H = 0$.

It then suffices to show that the operators $L_n^{\rm H}$ coincide with the ones
proposed in the statement of the theorem.
It follows from Corollary~\ref{corollary10425} that
for $m\geq -1$ and $0\leq k\leq [(m+1)/2]$,
\begin{align}
& \left[L_{m,2k},L_{-1,0}\right] = (m+1) L_{m-1,2k} - \delta_{m,1} \delta_{k,1} \frac{l}{2}, \label{lm2km10421}\\
& \left[L_{m,2k},L_{0,0}\right] = m L_{m,2k}. \label{lm2k00425}
\end{align}
As a particular case of~\eqref{lm2km10421}, we have, for $k\geq1$,
$$ \left[L_{2k-1,2k},L_{-1,0}\right] = - \delta_{k,1} \frac{l}{2}, $$
which leads to the expression~\eqref{lminus1421} for the operator~$L_{-1}^{\rm H}$. 
Let us continue the computation on the explicit expressions of~$L_n^{\rm H}$ with $n\geq0$.
Using equation~\eqref{Vira_deform_s}
and using the fact that the operators $L_{2k-1,2k}$ pairwise commute,
we find
\begin{align}\label{Ln(s)}
L_n^{\rm H}& = \sum_{m=0}^\infty \frac{{\rm ad}_{U}^m}{m!} L_{n,0}\notag\\
& = L_{n,0} + \sum_{m=1}^\infty \frac{1}{m!} \sum_{k_1,...,k_m\geq 1} \prod_{i=1}^m s_{k_1}\cdots s_{k_m} 
{\rm ad}_{L_{2k_m-1,2k_m}} \cdots {\rm ad}_{L_{2k_1-1,2k_1}} L_{n,0}.
\end{align}
Using Corollary~\ref{corollary10425} we find that, for $0\leq\ell\leq[(n+1)/2]$,
\begin{align}
\left[L_{2k-1,2k},L_{n,2\ell}\right] = \frac{(2k)!}{\binom{n+2k}{n+1}}
\sum_{h=k+\ell}^\infty \left\{\begin{array}{c} 2h-2 \ell \\ 2k-1 \end{array}\right\}
 \binom{2h}{2\ell} L_{n+2k-1,2h} - \delta_{k,1} \delta_{n,-1} \delta_{\ell,1} \frac{l}{2}, \label{l2km12kn2lpp425}
\end{align}
where we recall that $\left\{\begin{array}{c} a \\ b\end{array}\right\}$ denotes the Stirling number of
the second kind (see~\eqref{stirlinggen425}). 
Therefore, we have, for $k_1,...,k_m\geq 1$, $n\geq 1$ and $\ell\geq 0$, that
\begin{align}
&{\rm ad}_{L_{2k_m-1,2k_m}} \cdots {\rm ad}_{L_{2k_1-1,2k_1}}L_{n,2\ell}\notag\\
& = \frac{\prod_{j=1}^m (2k_j)!}{ \prod_{j=1}^m \binom{n+1-j+2\sum_{i=1}^j k_i}{n+2-j+2\sum_{i=1}^{j-1} k_i} } 
\sum_{h_1=k_1+\ell}^\infty \sum_{h_2=k_2+h_1}^\infty \dots \sum_{h_m=k_m+h_{m-1}}^\infty \nn\\
& \qquad \prod_{j=1}^m \left\{\begin{array}{c} 2h_j-2h_{j-1} \\ 2k_j-1\end{array}\right\} 
\prod_{j=1}^m \binom{2h_j}{2h_{j-1}} L_{n+(2k_1-1)+\cdots+(2k_m-1),2h},\notag
\end{align}
where it is understood that $h_0=\ell$.
The theorem is proved.
\end{proof}

For the case that $M$ is the one-dimensional Frobenius manifold with the Frobenius potential $F = (v^1)^3/6$.  We have
\beq
L_{-1}^{\rm H} =  L_{-1,0} + \frac{\sigma_1}{24}.
\eeq
We also have the following alternative explicit expression for the Virasoro operators~$L_n^{\rm H}$, $n \geq 0$:
\begin{equation}
\begin{split}
L_n^{\rm H} = & L_{n,0} + \sum_{m=1}^{n+1} \frac{(-1)^m}{m!} \sum_{k_1, \dots, k_m \geq 1}
\prod_{i=1}^m \left(\sigma_{k_i} \frac{B_{2k_i}}{(2k_i)!}\right) \\
& \times \biggl( \sum_{\ell=0}^\infty \left(\delta_{2k_m-1} \cdots \delta_{2k_1-1} A(\ell)\right)
 \tilde t_\ell \frac{\p}{\p t_{\ell+n+\sum_{i=1}^m (2k_i-1)}} \\
& \; \; - \frac{\e^2}{2} \sum_{a=0}^{n-1+ \sum_{i=1}^m (2k_i-1)}
(-1)^a \left(\delta_{2k_m-1} \cdots \delta_{2k_1-1} A(-a-1)\right) \\
& \; \; \times \frac{\p^2}{\p t_{a}\p t_{n-1+\sum_{i=1}^m (2k_i-1)-a}} \biggr),
\end{split}
\end{equation}
where $A(\ell) := \prod_{i=0}^n (\ell+i+\half)$ and $\delta_a$ denotes the difference operator $\delta_a A(x):= A(x+a)-A(x)$.
See~\cite{Zhou} for the proof.

\bibliographystyle{amsplain}

\medskip
\medskip

\noindent Si-Qi Liu

\noindent Department of Mathematical Sciences, Tsinghua University, 

\noindent Beijing 100084, P.R. China 

\noindent liusq@mail.tsinghua.edu.cn

\medskip
\medskip

\noindent Di Yang

\noindent School of Mathematical Sciences, University of Science and Technology of China,

\noindent Hefei 230026, P.R. China 

\noindent diyang@ustc.edu.cn

\medskip
\medskip

\noindent Youjin Zhang

\noindent Department of Mathematical Sciences, Tsinghua University, 

\noindent Beijing 100084, P.R. China 

\noindent youjin@mail.tsinghua.edu.cn

\medskip
\medskip

\noindent Jian Zhou

\noindent Department of Mathematical Sciences, Tsinghua University, 

\noindent Beijing 100084, P.R. China 

\noindent jianzhou@mail.tsinghua.edu.cn

\end{document}